\DeclareMathAlphabet{\mathcal}{OMS}{cmsy}{m}{n}
\newcommand{\eop}{\hspace*{\fill}$\Box$}
\def\qed{\eop}
\newtheorem{definition}{Definition}[section]
\newtheorem{lemma}[definition]{Lemma}
\newtheorem{theorem}[definition]{Theorem}
\newtheorem{corollary}[definition]{Corollary}
{\theorembodyfont{\rmfamily}
  \newtheorem{example}[definition]{\it Example}
  \newtheorem{proof}{\it Proof.}
}
\def\cC{\mathcal{C}}
\def\cD{\mathcal{D}}
\def\cF{\mathcal{F}}
\def\cG{\mathcal{G}}
\def\cM{\mathcal{M}}
\def\cP{\mathcal{P}}
\def\cR{\mathcal{R}}
\def\cS{\mathcal{S}}
\def\cV{\mathcal{V}}
\def\Root{\mathit{root}}
\def\Var{{\mathcal{V}\!\mathit{ar}}}
\def\Dom{{\mathcal{D}\mathit{om}}}
\def\Range{{\mathcal{R}\mathit{an}}}
\def\symb#1{\mathsf{#1}}
\def\Subst{{\mathcal{S}\mathit{ub}}}
\def\Hole{\Box}
\def\Rule#1#2{#1 \to #2}
\def\Condition#1#2{#1 \tto #2}
\def\Int{\mathbb{Z}}
\def\cPred{\cP}
\def\EqualityPred{\simeq}
\def\limplies{\Rightarrow}
\def\Constrained#1{[\hspace{-2pt}[\,#1\,]\hspace{-2pt}]}
\def\CRule#1#2#3{#1 \to #2 ~\Constrained{#3}}
\def\FVar{\Var}
\def\Fol{\mathcal{F}\!\mathit{ol}}
\def\LIA{\mathsf{LIA}}
\def\cGlia{\cG_\LIA}
\def\cPlia{\cPred_\LIA}
\def\cMlia{\cM_\LIA}
\def\cGint{\cG_\Int}
\def\cPint{\cP_\Int}
\def\cMint{\cM_\Int}
\def\Pol{\mathcal{P}\!\mathit{ol}}
\def\DP{\mathit{DP}}
\def\DPproblem#1#2{#1}
\def\Marked#1{#1^\sharp}
\def\Proc{\mathit{Proc}}
\def\PIProc{\Proc_{\mathsf{PI}}}
\def\Pgtr{\cS_>}
\def\Pbound{\cS_{\mathsf{bound}}}
\def\Pfilter{\cS_{\mathsf{filter}}}
\def\rel{\bowtie}
\newcommand{\newPgtr}[1][\rel]{\cS_{#1}}
\def\DecDec{(>,\geq,\geq)}
\def\DecInc{(>,\leq,\leq)}
\def\IncDec{(<,\geq,\leq)}
\def\IncInc{(<,\leq,\geq)}
\newcommand{\PIProcX}[1][(\rel_1,\rel_2,\rel_3)]{\Proc_{#1}}
\def\Condition#1{(\textbf{#1})}
\def\CondInterpreted{\Condition{A1}}
\def\CondPreserve{\Condition{A2}}
\def\CondSubtraction{\Condition{A3}}
\def\CondPolF{\Condition{A4}}
\def\CondRdec{\Condition{R1}}
\def\CondRinc{\Condition{R2}}
\def\CondPolMdec{\Condition{P1}}
\def\CondPolMinc{\Condition{P2}}
\def\CondSdec{\Condition{S1}}
\def\CondSinc{\Condition{S2}}
\def\AProVE{\textsf{AProVE}}
\def\cRconst{\cR_1}
\def\cRsp{\cR_0}
\def\cRMc{\cR_1}
\def\cRack{\cR_2}
\def\cRnest{\cR_3}
\def\Success{success}
\title{Transforming Dependency Chains of Constrained TRSs into Bounded Monotone Sequences of Integers}
\author{Tomohiro Sasano
\institute{Graduate School of Information Science\\
Nagoya University\\
Nagoya, Japan}
\email{sasano@trs.cm.is.nagoya-u.ac.jp}
\and
Naoki Nishida 
\institute{Graduate School of Informatics\\
Nagoya University\\
Nagoya, Japan}
\email{nishida@i.nagoya-u.ac.jp}
\and
Masahiko Sakai
\institute{Graduate School of Informatics\\
Nagoya University\\
Nagoya, Japan}
\email{sakai@i.nagoya-u.ac.jp}
\and
Tomoya Ueyama
\institute{Graduate School of Information Science\\
Nagoya University\\
Nagoya, Japan}
}
\begin{document}
\maketitle

%%%%%%%%%% ABSTRACT %%%%%%%%%%
\begin{abstract}
In the dependency pair framework for proving termination of rewriting systems, polynomial interpretations are used to transform dependency chains 
into bounded decreasing sequences of integers, and they play an important role for the success of proving termination, especially for constrained rewriting systems.
In this paper, we show sufficient conditions of linear polynomial interpretations for  
transforming dependency chains into bounded monotone (i.e., decreasing or increasing) sequences of integers.
Such polynomial interpretations transform rewrite sequences of the original system into decreasing or increasing sequences independently of the transformation of dependency chains.
When we transform rewrite sequences into increasing sequences, polynomial interpretations have non-positive coefficients for reducible positions of marked function symbols.
We propose four DP processors parameterized by transforming dependency chains and rewrite sequences into either decreasing or increasing sequences of integers, respectively.
We show that such polynomial interpretations make us succeed in proving termination of the McCarthy 91 function over the integers. 
\end{abstract}

%%%%%%%%%% KEYWORDS %%%%%%%%%%

%termination
%constrained rewriting
%polynomial interpretation
%the dependency pair framework

%%%%%%%%%% DOCUMENT %%%%%%%%%%

%%%%%%%%%%%%%%%%%%%%%%%%%%%%%%%%%%%%%%%%%%%%%%%%%%%%%%%%%%%
\section{Introduction}
\label{sec:intro}
%%%%%%%%%%%%%%%%%%%%%%%%%%%%%%%%%%%%%%%%%%%%%%%%%%%%%%%%%%%

Recently, techniques developed for term rewriting systems (TRSs, for short)
have been applied to the verification of
programs written in several programming languages %
%\cite{SGST06,GSST06,FNSKS08b,FK09,SNSSK09,FGPSF09,OBvEG10,SGSST10,GRSST11,SESGF11,SNS11,FK12,Vid12,KN13frocos,BEFFG14,KN14aplas}.
(cf.~\cite{FKN17tocl}).
%termination of C programs (or ``while'' programs) over
%integers~\cite{FK09,FGPSF09,SNS11,Vid12,KN13frocos},
%termination of Prolog~\cite{SGST06,SGSST10}, 
%termination of Haskell~\cite{GSST06,GRSST11},
%termination of Java Bytecode~\cite{OBvEG10},
%equivalence between functions of C programs over
%integers~\cite{FNSKS08b,SNSSK09,FK12,KN14aplas},
%complexity of Prolog~\cite{SESGF11},
%complexity of integer programs~\cite{BEFFG14}, and so on.
In verifying programs with comparison operators over the integers via term rewriting,
\emph{constrained rewriting} is very useful to avoid very 
complicated rewrite rules for the comparison operators, and various formalizations of
constrained rewriting have been proposed:
\emph{constrained TRSs}~\cite{FNSKS08b,BJ08,SNSSK09,SNS11} 
(e.g., \emph{membership conditional TRSs}~\cite{Toy87}),
\emph{constrained equational systems} (CESs, for short)
~\cite{FK08},
\emph{integer TRSs} (ITRSs, for short)~\cite{FGPSF09},
\emph{PA-based TRSs} ($\Int$-TRSs)~\cite{FK09} (simplified variants of CESs), and
\emph{logically constrained TRSs} (LCTRSs, for short)~\cite{KN13frocos,KN14aplas}.
%Generally speaking, constrained rewriting is an extension of term rewriting
%from the viewpoint of allowing rewrite rules to have \emph{guard constraints} which 
%are interpreted by built-in semantics specified by users.
%Unlike \emph{conditional} rewriting~(cf.~\cite{Ohl02}), the semantics of function and predicate symbols for e.g., 
%integers and their comparison operators is not determined by rewrite rules, i.e.,
%independent of the rewrite relation over terms. 
%For this reason, rewrite rules for user-defined functions can concentrate on specifying the computation which users would like to define.

One of the most important properties that are often verified in practice is \emph{termination}, and many methods for proving termination have been developed in the literature,
especially in the field of term rewriting
(cf.~the survey of Zantema~\cite{Zan03}).
% and the termination portal%
%\footnote{\url{http://www.termination-portal.org/}}).
At present, the \emph{dependency pair} (DP) \emph{method}~\cite{AG00} and the
\emph{DP framework}~\cite{GTS04} are key fundamentals for proving termination of TRSs, and they have been extended to several kinds of rewrite
systems~\cite{FK08,AEFGGLST08,FK09,FGPSF09,SNS11,LM14,FKN17tocl}.
In the DP framework, termination problems are reduced to finiteness of \emph{DP problems}
which consist of sets of dependency pairs and rewrite rules. 
We prove finiteness by applying \emph{sound DP processors} to an input DP problem and then by
decomposing the DP problem to smaller ones in the sense that all the DP sets of output DP problems
are strict subsets of the DP set of the input problem.
In the DP frameworks for constrained rewriting~\cite{FK08,SNS11}, the DP processors based on \emph{polynomial interpretations} (the PI-based processors, for short) decompose a given DP problem by using a polynomial interpretation $\Pol$ that transforms dependency chains into bounded decreasing sequences of integers---roughly speaking, a dependency pair $\CRule{\Marked{s}}{\Marked{t}}{\varphi}$ is removed from the given problem if the integer arithmetic formula $\varphi \Rightarrow \Pol(s) > \Pol(t)$ is valid. 
The processor in~\cite{SNS11} can be considered a simplified version of that in~\cite{FK08} in the sense that for efficiency, $\Pol$ drops \emph{reducible positions}---arguments of marked symbols, which may contain an uninterpreted function symbol when a dependency pair is instantiated---and then the rules in the given system can be ignored in applying the PI-based processor.
Such a simplification is not so restrictive when we prove termination of \emph{counter-controlled loops}, e.g., %of the form 
\verb|for(i=0;i<n;i++){|\,\ldots \verb|}|.
However, the simplification sometimes prevents us from proving termination of a function, the definition of which has nested function calls.

%On the other hand, 
%\emph{transformational processors} modify dependency pairs of given DP problems
%by means of \emph{narrowing}, \emph{instantiation}, and \emph{rewriting}~\cite{GTSF06}.
%As far as we know, 
%none of the transformational processors for TRSs has been
%extended to any formalization of constrained rewriting,
%while \emph{chaining} has been proposed for PA-based TRSs~\cite{FK09}.

Let us consider the following constrained TRS defining the \emph{McCarthy 91 function}:
\[
  \cRMc =
 \left\{
 \begin{array}{r@{~~~~}r@{\>}c@{\>}l@{~~}c@{}c@{}c}
  (1) & \CRule{\symb{f}(x) &}{& \symb{f}(\symb{f}(\symb{s}^{11}(x))) &}{& \symb{s}^{101}(\symb{0}) > x &} \\
  (2) & \CRule{\symb{f}(x) &}{& \symb{p}^{10}(x) &}{& \neg (\symb{s}^{101}(\symb{0}) > x) &} \\
 \end{array}
 \right\}
 \cup
% \left\{
%% \begin{array}{r@{\>}c@{\>}l}
%  \Rule{\symb{s}(\symb{p}(x)) &}{& x} \\
%  \Rule{\symb{p}(\symb{s}(x)) &}{& x} \\
% \end{array}
% \right\}
  \cRsp
\]
where
$\cRsp
=
\{~
  \Rule{\symb{s}(\symb{p}(x))}{x}, ~~
  \Rule{\symb{p}(\symb{s}(x))}{x}
~\}$.
It is known that the function always terminates and returns $91$ if an integer $n \leq 101$ is given as input, and $n-10$ otherwise: 
$\forall n \in \Int.\ ({n \leq 101} \Rightarrow {\symb{f}(n)=91}) \land ({n > 101} \Rightarrow {\symb{f}(n)=n-10})$.
Termination of the McCarthy 91 function can be proved automatically if the function is defined over the natural numbers~\cite{Giesl97}.
However, the method in~\cite{Giesl97} cannot prove termination of the function that is defined over the integers.
As another approach, let us consider the DP framework. 
The dependency pairs of $\cRMc$ are:
\[
\begin{array}{@{}l@{\>}l@{}}
  \DP(\cRMc) = &
 \left\{
 \begin{array}{@{~~~\,}r@{~~~~}r@{\>}c@{\>}l@{~~}c@{}c@{}c}
  (3) & \CRule{\Marked{\symb{f}}(x) &}{& \Marked{\symb{f}}(\symb{f}(\symb{s}^{11}(x))) &}{& \symb{s}^{101}(\symb{0}) > x &} \\
  (4) & \CRule{\Marked{\symb{f}}(x) &}{& \Marked{\symb{f}}(\symb{s}^{11}(x)) &}{& \symb{s}^{101}(\symb{0}) > x &} \\
 \end{array}
 \right\} \\[5pt]
 & {} \cup 
 \{~ (5) ~~~~ \CRule{\Marked{\symb{f}}(x)}{\Marked{\symb{s}}(\symb{s}^i(x)) ~~~~~~~\,}{\symb{s}^{101}(\symb{0}) > x} \mid 0 \leq i \leq 10 ~\} 
 \\
 & {} \cup 
 \{~ (6) ~~~~ \CRule{\Marked{\symb{f}}(x)}{\Marked{\symb{p}}(\symb{p}^i(x)) ~~~~~~\,}{\neg (\symb{s}^{101}(\symb{0}) > x} \mid 0 \leq i \leq 9 ~\} 
 \\
 \end{array}
\]
Let us focus on (3) and (4) because no infinite chain of dependency pairs contains any of (5) and (6).
Unfortunately, the method in~\cite{SNS11} for proving termination of constrained TRSs cannot prove termination of $\cRMc$ because the right-hand side of (3) is of the form $\Marked{\symb{f}}(\symb{f}(\symb{s}^{11}(x)))$ and thus we have to drop the first argument of $\Marked{f}$, i.e., $\Pol(\Marked{f}) = a_0$ where $a_0$ is an integer.
Both sides of (3) and (4) are converted by $\Pol$ to $a_0$ and we do not remove any of (3) and (4).
To make the method in~\cite{SNS11} more powerful, let us allow $\Pol(\Marked{f})$ to keep its reducible positions as in~\cite{FK08}.
Then, for $\cRMc$, $\Pol$ has to be an interpretation over the natural numbers, and for each rule $\CRule{\ell}{r}{\varphi}$ in $\cRMc$ the validity of the integer arithmetic formula $\varphi \Rightarrow \Pol(\ell) \geq \Pol(r)$ is required. 
However, such an interpretation does not exist for $\cRMc$.
%Note that {\Ctrl}~\cite{KN15lpar} fails to prove termination of the LCTRS corresponding to $\cRMc$.

In this paper, 
we extend the PI-based processor in~\cite{SNS11} by making its linear polynomial interpretation $\Pol$ transform dependency chains into bounded \emph{monotone} (i.e., decreasing or increasing) sequences of integers.
To be more precise, given a constrained TRS $\cR$, 
\begin{itemize}
	\item for function symbols in $\cR$, $\Pol$ is an interpretation over the natural numbers as in~\cite{FK08}, while constants that are not coefficients may be negative integers (i.e., for $\Pol(f)=b_0 + b_1x_1 + \cdots + b_n x_n$, the coefficients $b_1,\ldots,b_n$ have to be non-negative integers but the constant $b_0$ may be a negative integer),
	\item for rules in $\cR$, we require one of the following:
		\begin{itemize}
			\item[\CondRdec] $\varphi \Rightarrow \Pol(\ell) \geq \Pol(r)$ is valid for all $\CRule{\ell}{r}{\varphi} \in \cR$ (i.e., rewrite sequences of $\cR$ are transformed by $\Pol$ into \emph{decreasing} sequences of integers),
				or
			\item[\CondRinc] $\varphi \Rightarrow \Pol(\ell) \leq \Pol(r)$ is valid for all $\CRule{\ell}{r}{\varphi} \in \cR$ (i.e., rewrite sequences of $\cR$ are transformed by $\Pol$ into \emph{increasing} sequences of integers),
			and
		\end{itemize}
	\item for monotonicity of transformed sequences, coefficients for \emph{reducible positions} of \emph{marked} symbols have to satisfy a sufficient condition---to be non-negative for {\CondRdec} and to be non-positive for {\CondRinc}---and the second argument of the subtraction symbol (i.e., ``$-$'') is an interpretable term in anywhere.
\end{itemize}
Such a polynomial interpretation transforms all dependency chains into bounded decreasing sequences of integers, or all to bounded increasing sequences of integers.
%if we adopt the case where $\varphi \Rightarrow \Pol(\ell) \geq \Pol(r)$, then rewrite sequences of $\cR$ are transformed by $\Pol$ into decreasing sequences of integers, and otherwise, rewrite sequences of $\cR$ are transformed into increasing sequences. 
Since we have two possibilities for transforming rewrite sequences of $\cR$, we have four kinds of PI-based processors: $\PIProcX[\DecDec]$, $\PIProcX[\DecInc]$, $\PIProcX[\IncDec]$, and $\PIProcX[\IncInc]$ in Table~\ref{tbl:overview}.
%We show two of the PI-based processors are equivalent and the remaining two are equivalent in the sense of theoretical power for decomposing given DP problems. 
Then, we show an experimental result to compare the four PI-based processors by using them to prove termination of 
%$\cRMc$ shown in this section
some constrained TRSs. % corresponding to programs in~\cite{Giesl97}}.
Although this paper adopts the class of constrained TRSs in~\cite{FNSKS08b,SNSSK09,SNS11}, it would be straightforward to adapt our results to other higher-level styles of constrained systems in, e.g.,~\cite{FK08,FK12,KN13frocos}.
It would also be straightforward to extend the results for the single-sorted
case to the many-sorted one (cf.~\cite{Kop13}).

\begin{table}[tb]
\caption{our transformations of ground dependency chains into monotone sequences of integers.}
\label{tbl:overview}
\centering
\small
\begin{tabular}{@{~}c|c@{\hspace{-2ex}}c@{\hspace{-2ex}}c@{\hspace{-1.5ex}}c@{\hspace{-1.5ex}}c@{\hspace{-2ex}}c@{\hspace{-2ex}}c@{\hspace{-1.5ex}}c@{\hspace{-1.5ex}}c@{\hspace{-2ex}}c@{~}c@{~}}
\hline
\rule{0pt}{13pt}
chain of $\cR$
& $\Marked{f_0}(s_0)$ & $\to_{\varepsilon,\DP(\cR)}$ & $\Marked{f_1}(t_0)$ & $\to^*_{>\varepsilon,\cR}$
& $\Marked{f_1}(s_1)$ & $\to_{\varepsilon,\DP(\cR)}$ & $\Marked{f_2}(t_1)$ & $\to^*_{>\varepsilon,\cR}$
& $\Marked{f_2}(s_2)$ & $\to_{\varepsilon,\DP(\cR)}$ %& \Marked{f_3}(t_2) & \to^*_{>\varepsilon,\cR}
& $\cdots$ \\[-2pt]
& & & $\vdots$ & 
& $\vdots$ & & $\vdots$ & 
& $\vdots$ & %& \vdots & 
& \\[-2pt]
$\cR$-steps
& & & $t_0$ & $\to^*_{\cR}$
& $s_1$ & & $t_1$ & $\to^*_{\cR}$
& $s_2$ & %& t_2 & \to^*_{\cR}
& $\cdots$ \\[2pt]
\hline\hline
\rule{0pt}{13pt}
$\PIProc$~\cite{SNS11}
& $\Pol(\Marked{f_0}(s_0))$ & $\geq$ & $\Pol(\Marked{f_1}(t_0))$ & $=$
& $\Pol(\Marked{f_1}(s_1))$ & $\geq$ & $\Pol(\Marked{f_2}(t_1))$ & $=$
& $\Pol(\Marked{f_2}(s_2))$ & $\geq$ %& \Pol(\Marked{f_3}(t_2)) & =
& $\cdots$ \\[-2pt]
(Def.~\ref{def:PIProc}) & & & $\vdots$ & 
& $\vdots$ & & $\vdots$ & 
& $\vdots$ & %& \vdots & 
& \\
& & & $\Pol(t_0)$ & $=$
& $\Pol(s_1)$ &  & $\Pol(t_1)$ & $=$
& $\Pol(s_2)$ &  %& \Pol(t_2) & =
& $\cdots$ \\[2pt]
%%& \rotatebox{90}{$<$} & & & 
%& \vee & & & 
%& \vee & & & 
%& \\
%& n & & & 
%& n & & & 
%& \cdots \\
\hline\hline
\rule{0pt}{13pt}
$\PIProcX[\DecDec]$ 
& $\Pol(\Marked{f_0}(s_0))$ & $\geq$ & $\Pol(\Marked{f_1}(t_0))$ & $\geq$
& $\Pol(\Marked{f_1}(s_1))$ & $\geq$ & $\Pol(\Marked{f_2}(t_1))$ & $\geq$
& $\Pol(\Marked{f_2}(s_2))$ & $\geq$ %& \Pol(\Marked{f_3}(t_2)) & \geq
& $\cdots$ \\[-2pt]
($\PIProc+$\cite{FK08})
& & & $\vdots$ & 
& $\vdots$ & & $\vdots$ & 
& $\vdots$ & %& \vdots & 
& \\
(Sec.~\ref{subsec:dec-R}) & & & $\Pol(t_0)$ & $\geq$
& $\Pol(s_1)$ &  & $\Pol(t_1)$ & $\geq$
& $\Pol(s_2)$ &  %& \Pol(t_2) & \geq
& $\cdots$ \\[2pt]
%%& \rotatebox{90}{$\leq$} & & & 
%& \vee & & & 
%& \vee & & & 
%& \\
%& n & & & 
%& n & & & 
%& \cdots \\
\hline
\rule{0pt}{13pt}
$\PIProcX[\DecInc]$
& $\Pol(\Marked{f_0}(s_0))$ & $\geq$ & $\Pol(\Marked{f_1}(t_0))$ & $\geq$
& $\Pol(\Marked{f_1}(s_1))$ & $\geq$ & $\Pol(\Marked{f_2}(t_1))$ & $\geq$
& $\Pol(\Marked{f_2}(s_2))$ & $\geq$ %& \Pol(\Marked{f_3}(t_2)) & \geq
& $\cdots$ \\[-2pt]
(Sec.~\ref{subsec:inc-R})
& & & $\vdots$ & 
& $\vdots$ & & $\vdots$ & 
& $\vdots$ & %& \vdots & 
& \\
& & & $\Pol(t_0)$ & \textcolor{blue}{$\leq$}
& $\Pol(s_1)$ &  & $\Pol(t_1)$ & \textcolor{blue}{$\leq$}
& $\Pol(s_2)$ &  %& \Pol(t_2) & \textcolor{blue}{\leq}
& $\cdots$ \\[2pt]
%%& \rotatebox{90}{$<$} & & & 
%& \vee & & & 
%& \vee & & & 
%& \\
%& n & & & 
%& n & & & 
%& \cdots \\
\hline
\rule{0pt}{13pt}
$\PIProcX[\IncDec]$
& $\Pol(\Marked{f_0}(s_0))$ & \textcolor{blue}{$\leq$} & $\Pol(\Marked{f_1}(t_0))$ & \textcolor{blue}{$\leq$}
& $\Pol(\Marked{f_1}(s_1))$ & \textcolor{blue}{$\leq$} & $\Pol(\Marked{f_2}(t_1))$ & \textcolor{blue}{$\leq$}
& $\Pol(\Marked{f_2}(s_2))$ & \textcolor{blue}{$\leq$} %& \Pol(\Marked{f_3}(t_2)) & \textcolor{blue}{\leq}
& $\cdots$ \\[-2pt]
(Sec.~\ref{subsec:inc-DP})
& & & $\vdots$ & 
& $\vdots$ & & $\vdots$ & 
& $\vdots$ & %& \vdots & 
& \\
& & & $\Pol(t_0)$ & $\geq$
& $\Pol(s_1)$ &  & $\Pol(t_1)$ & $\geq$
& $\Pol(s_2)$ &  %& \Pol(t_2) & \geq
& $\cdots$ \\[2pt]
%%& \rotatebox{90}{$<$} & & & 
%& \vee & & & 
%& \vee & & & 
%& \\
%& n & & & 
%& n & & & 
%& \cdots \\
\hline
\rule{0pt}{13pt}
$\PIProcX[\IncInc]$
& $\Pol(\Marked{f_0}(s_0))$ & \textcolor{blue}{$\leq$} & $\Pol(\Marked{f_1}(t_0))$ & \textcolor{blue}{$\leq$}
& $\Pol(\Marked{f_1}(s_1))$ & \textcolor{blue}{$\leq$} & $\Pol(\Marked{f_2}(t_1))$ & \textcolor{blue}{$\leq$}
& $\Pol(\Marked{f_2}(s_2))$ & \textcolor{blue}{$\leq$} %& \Pol(\Marked{f_3}(t_2)) & \textcolor{blue}{\leq}
& $\cdots$ \\[-2pt]
(Sec.~\ref{subsec:inc-DP})
& & & $\vdots$ & 
& $\vdots$ & & $\vdots$ & 
& $\vdots$ & %& \vdots & 
& \\
& & & $\Pol(t_0)$ & \textcolor{blue}{$\leq$}
& $\Pol(s_1)$ &  & $\Pol(t_1)$ & \textcolor{blue}{$\leq$}
& $\Pol(s_2)$ &  %& \Pol(t_2) & \textcolor{blue}{\leq}
& $\cdots$ \\[2pt]
%%& \rotatebox{90}{$<$} & & & 
%& \vee & & & 
%& \vee & & & 
%& \\
%& n & & & 
%& n & & & 
%& \cdots \\
\hline
\end{tabular}
\end{table}

The contribution of this paper is 
%to formalize linear polynomial interpretations that transform dependency chains into bounded monotone (i.e., not only decreasing but also increasing) sequences of integers, and that transform rewrite sequences of the given constrained TRS into monotone sequences of integers.
to develop a technique to automatically prove termination of the McCarthy 91 function via linear polynomial interpretations that transform dependency chains into bounded monotone (i.e., not only decreasing but also increasing) sequences of integers, and that transform rewrite sequences of the given constrained TRS into monotone sequences of integers.

This paper is organized as follows.
In Section~\ref{sec:preliminaries}, we briefly recall the basic
notions and notations of constrained rewriting.
In Section~\ref{sec:DP-framework}, we briefly recall the DP method
for constrained TRSs. 
In Section~\ref{sec:improvement}, 
we show an improvement of the PI-based processor and also show results of experiments to evaluate the proposed PI-based processors.
In Section~\ref{sec:conclusion}, we conclude this paper and describe related work and future work of this research.

%%%%%%%%%%%%%%%%%%%%%%%%%%%%%%%%%%%%%%%%%%%%%%%%%%%%%%%%%%%%%%%%%%%%%%%%%%
\section{Preliminaries}
\label{sec:preliminaries}
%%%%%%%%%%%%%%%%%%%%%%%%%%%%%%%%%%%%%%%%%%%%%%%%%%%%%%%%%%%%%%%%%%%%%%%%%%

%\comment[TODO]{Drop some contens which are not necessary to understand
%the paper, shortening this section.}

In this section, we briefly recall the basic notions and notations of term
rewriting~\cite{BN98,Ohl02}, and constrained
rewriting~\cite{FNSKS08b,BJ08,SNSSK09,SNS11}.
%and narrowing~\cite{Hul80}. 

Throughout the paper, we use $\cV$ as a countably infinite set of
\emph{variables}.
We denote the set of \emph{terms} over a
signature $\Sigma$ and a variable set $V \subseteq \cV$ by $T(\Sigma,V)$. 
We often write $f/n$ to represent an $n$-ary symbol $f$.
We abbreviate the set $T(\Sigma,\emptyset)$ of \emph{ground terms} over $\Sigma$
to $T(\Sigma)$. 
We denote the set of variables appearing in a term $t$ by $\Var(t)$.
%For terms $s$ and $t$, we write $s \equiv t$ if $s$ and $t$ are
%\emph{syntactically} identical. 
A \emph{hole} $\Hole$ is a special constant not appearing in considered
signatures (i.e., $\Hole\notin\Sigma$), and a term in $T(\Sigma\cup\{\Hole\},V)$ is
called a
\emph{context} over $\Sigma$ and $V$ if the hole $\Hole$ appears in the term
exactly once. 
We denote the set of contexts over $\Sigma$ and $V$ by $T_\Hole(\Sigma,V)$. 
For a term $t$ and a context $C[~]_p$ with the hole at a \emph{position}
$p$, we denote by $C[t]_p$ the term obtained from $t$ and $C[~]_p$ by
replacing the hole at $p$ by $t$.
We may omit $p$ from $C[~]_p$ and $C[t]_p$. 
For a term $C[t]_p$, the term $t$ is a \emph{subterm} of $C[t]$
(at $p$). 
%, written as $t \unlhd C[t]$.
Especially, when $p$ is not the \emph{root} position $\varepsilon$, we
call $t$ a \emph{proper subterm} of $C[t]$.
%, written as $t \lhd C[t]$.
For a term $s$ and a position $p$ of $s$, we denote the subterm of $s$
at $p$ by $s|_p$, and the function symbol at the root position of $s$ by $\Root(s)$. 

The \emph{domain} and \emph{range} of a \emph{substitution} $\sigma$ are
denoted by $\Dom(\sigma)$ and $\Range(\sigma)$, respectively.
%, and the set of variables introduced by $\sigma$ is denoted by $\VRange(\sigma)$:
%$\VRange(\sigma) = \bigcup_{x \in \Dom(\sigma)} \Var(x\sigma)$.
%We denote the \emph{identity substitution} by $\ID$.
For a signature $\Sigma$, a substitution $\sigma$ is called \emph{ground}
if $\Range(\sigma) \subseteq T(\Sigma)$.
For a subset $V$ of $\cV$, we
denote the set of substitutions over $\Sigma$ and $V$ by $\Subst(\Sigma,V)$:
$\Subst(\Sigma,V) = \{ \sigma \mid \Range(\sigma) \subseteq T(\Sigma,V)\}$.
We abbreviate $\Subst(\Sigma,\emptyset)$ to $\Subst(\Sigma)$.
We may write $\{ x_1 \mapsto t_1, ~\ldots, ~x_n \mapsto t_n \}$ instead of
$\sigma$ if $\Dom(\sigma)=\{x_1,\ldots,x_n\}$ and $\sigma(x_i)=t_i$ for
all $1 \leq i \leq n$.
We may write $t\sigma$ for the application $\sigma(t)$ of $\sigma$ to
$t$. 
For a subset $V$ of $\cV$, we denote the restricted substitution of
$\sigma$ w.r.t.\ $V$ by $\sigma|_V$:
$\sigma|_V = \{ x \mapsto \sigma(x) \mid x \in \Dom(\sigma)\cap V\}$.
%We denote a \emph{most general unifier} of terms $s$ and $t$ by $\MGU(s,t)$. 
%Note that most general unifiers of terms are unique up to variable renaming. 

%\comment{Is it possible to replace $\cPred$ by $\cP$? (check the use of
%$\cP$ in the paper) - Yes!}

Let $\cG$ be a signature (e.g., a subsignature of $\Sigma$) and $\cPred$ a set of \emph{predicate
symbols}, each of which has a fixed arity, 
and $\cM$ a \emph{structure}
specifying interpretations for symbols in $\cG$ and $\cPred$:
$\cM$ has a \emph{universe} (a non-empty set), and $g^\cM$ and $p^\cM$
are interpretations for a function symbol $g\in\cG$ and a predicate
symbol $p\in\cPred$, respectively.
Ground terms in $T(\cG)$ are interpreted by $\cM$ in the usual way.
We use $\top$ and $\bot$ for Boolean values \emph{true} and
\emph{false},%
\footnote{
Note that $\top$ and $\bot$ are just symbols used in e.g., constraints of rewrite rules, and we distinguish them with \emph{true} and \emph{false} used as values.}
 and usual logical connectives $\neg$, $\vee$, $\land$, and $\limplies$, 
which are interpreted in the usual way. 
For the sake of simplicity, we do not use quantifiers in formulas. 
We assume that $\cPred$ contains a binary symbol $\EqualityPred$ for
\emph{equality}. 
For a subset $V\subseteq \cV$, we
denote the set of \emph{formulas} over $\cG$, $\cPred$, and $V$ by
$\Fol(\cG,\cPred,V)$. 
The set of variables in a formula $\varphi$ is denoted by $\FVar(\varphi)$.
Formulas in $\Fol(\cG,\cPred,\cV)$ are called \emph{constraints} (w.r.t.\ $\cM$).
We assume that for each element $a$ in the universe, there exists a
ground term $t$ in $T(\cG)$ such that $t^\cM=a$.
A ground formula $\varphi$ is said to \emph{hold w.r.t\ $\cM$}, 
written as $\cM \models \varphi$, if $\varphi$ is interpreted by $\cM$ as \emph{true}.
The application of a substitution $\sigma \in\Subst(\cG,\cV)$ is
naturally extended to formulas, and $\sigma(\varphi)$ is abbreviated to
$\varphi\sigma$.
Note that for a signature $\Sigma$ with $\cG\subseteq\Sigma$, we cannot apply $\sigma$ to $\varphi \in \Fol(\cG,\cPred,\cV)$ if $\sigma|_{\FVar(\varphi)} \notin \Subst(\cG,\cV)$.%
\footnote{
When considering formulas in $\Fol(\cG,\cPred,\cV)$, we force $\sigma\varphi$ to be in $\Fol(\cG,\cPred,\cV)$.}
A formula $\varphi$ is called \emph{valid w.r.t.\ $\cM$}
(\emph{$\cM$-valid}, for short) if $\cM \models \varphi\sigma$ for all ground
substitutions $\sigma\in\Subst(\cG)$ with
$\FVar(\varphi)\subseteq\Dom(\sigma)$, 
and called \emph{satisfiable w.r.t.\ $\cM$}
(\emph{$\cM$-satisfiable}, for short) if $\cM \models \varphi\sigma$ for some ground
substitution $\sigma\in\Subst(\cG)$ such that
$\FVar(\varphi)\subseteq\Dom(\sigma)$. 
%\begin{lemma}
%If $\varphi$ and $\varphi \limplies \psi$ are $\cM$-valid, then
%so is $\psi$.
%\end{lemma}
%\begin{proof}
%Trivial by the essential properties of first-order predicate logic.
%\qed
%\end{proof}
A structure $\cM$ for $\cG$ and $\cPred$ is called an \emph{LIA-structure}
if 
%\begin{itemize}
% \item
 the universe is the integers,
% \item
 every symbol $g \in \cG$ is interpreted as a linear integer
       arithmetic expression, 
       and
% \item 
 every symbol $p \in \cPred$ is interpreted as a %quantifier-free
       Presburger arithmetic sentence over the integers, e.g., binary
       comparison predicates.
%\end{itemize}

Let $\cF$ and $\cG$ be pairwise disjoint signatures (i.e., $\cF\cap\cG =
\emptyset$),%
\footnote{
A signature $\Sigma$ is explicitly divided
into $\cF$ and $\cG$ (i.e., $\Sigma=\cF\uplus\cG$) where $\cF$ is the set of
\emph{uninterpreted} symbols and $\cG$ the set of \emph{interpreted} symbols.
To make this distinguish clear, we always separate $\cF$ and $\cG$, e.g., we write $(\cF,\cG)$ but not $\cF\uplus\cG$.
}
$\cPred$ a set of predicate symbols, and $\cM$ a structure for
$\cG$ and $\cPred$. 
A \emph{constrained rewrite rule} over $(\cF,\cG,\cPred,\cM)$
is a triple $(\ell,r,\varphi)$, denoted by $\CRule{\ell}{r}{\varphi}$, such that
$\ell,r \in T(\cF\cup\cG,\cV)$, $\ell$ is not a variable, $\varphi \in
\Fol(\cG,\cPred,\cV)$, and $\Var(\ell) \supseteq \Var(r)\cup\FVar(\varphi)$. 
We usually consider $\cM$-satisfiable constraints for $\varphi$.
When $\varphi$ is $\top$, we may write $\Rule{\ell}{r}$ instead of
$\CRule{\ell}{r}{\top}$. 
A \emph{constrained term rewriting system} (constrained TRS, for short) over
$(\cF,\cG,\cPred,\cM)$ is a finite set $\cR$ of constrained
rewrite rules over $(\cF,\cG,\cPred)$. 
When % $\cG=\emptyset$ and 
$\varphi=\top$ for all rules
$\CRule{\ell}{r}{\varphi}\in\cR$, 
$\cR$ is a \emph{term rewriting system} (TRS, for short).
The \emph{rewrite relation} $\to_\cR$ of $\cR$ is defined as follows:
$\to_\cR = \{ (C[\ell\sigma]_p, C[r\sigma]_p) 
 \mid \CRule{\ell}{r}{\varphi} \in \cR,
 ~ C[~] \in T_\Hole(\cF\cup\cG,\cV),
 ~ \sigma \in \Subst(\cF\cup\cG,\cV),
 ~ \sigma|_{\FVar(\varphi)} \in \Subst(\cG,\cV),
 ~\mbox{$\varphi\sigma$ is $\cM$-valid} \}$.
To specify the position $p$ where the term is reduced,
we may write $\to_{p,\cR}$ or $\to_{>q,\cR}$ where $p > q$.
%For %a set $\cR$ of rewrite rules and 
%a signature $\Sigma$, we denote $\{ \CRule{\ell}{r}{\varphi} \in \cR \mid \Root(\ell) \in \Sigma \}$ by $\cR|_\Sigma$.
A term $t$ is called \emph{terminating} (w.r.t.\ $\cR$) if there is no infinite reduction sequence $t \to_\cR t_1 \to_\cR t_2 \to_\cR \cdots$. %starting with $t$.
$\cR$ is called \emph{terminating} if every term is terminating. 
%A pair $(t,\varphi)$ of a term $t$ and a constraint $\varphi$, written as
%$\CTerm{t}{\varphi}$, is called a \emph{constrained term}.
%A constrained term $\CTerm{t}{\varphi}$ is called \emph{linear} if $t$ is
%linear. 
%The \emph{rewrite relation} of $\cR$ under an $\cM$-satisfiable
%constraint $\varphi$, so-called \emph{constrained rewriting}, is defined as
%follows: 
%$\CTerm{C[l\sigma]}{\varphi} \to_\cR \CTerm{C[r\sigma]}{\varphi}$
%iff there exist a rewrite rule $\CRule{\ell}{r}{\psi} \in \cR$
%and a substitution $\sigma \in \Subst(\cF\cup\cG,\cV)$ such that 
%$\Range(\sigma|_{\FVar(\varphi)}) \subseteq T(\cG,\cV)$ and
%$\varphi \limplies (\psi\sigma)$ is $\cM$-valid. 
For a constrained TRS $\cR$ over $(\cF,\cG,\cPred,\cM)$, 
the sets $\cD_\cR$ and $\cC_\cR$ of \emph{defined symbols} and
\emph{constructors}, respectively, are defined as follows: 
$\cD_\cR = \{ f \in \cF\cup\cG \mid \CRule{f(t_1,\ldots,t_n)}{r}{\varphi} \in
\cR\}$ and $\cC_\cR = (\cF\cup\cG) \setminus \cD_\cR$.

\begin{example}
\label{ex:Glia}
Let $\cGlia = \{ \symb{0}/0, \symb{s}/1, \symb{p}/1
%, +/2, -/2 \}$, 
%$\cPlia = \{ {=}/2, {\ne}/2, {>}/2, {\geq}/2, {<}/2, {\leq}/2 
\}$, 
$\cPlia = \{ {\EqualityPred}, {>}, {\geq} 
%, {<}, {\leq} 
\}$, 
and 
$\cMlia$ an LIA-structure for $\cGlia$ and $\cPlia$ such that
the universe is $\Int$,
$\symb{0}^{\cMlia} = 0$,
%       $\symb{s}^{\cMlia}(x) = x \mathop{+_\Int} 1$,
       $\symb{s}^{\cMlia}(x) = x + 1$,
%       $\symb{p}^{\cMlia}(x) = x \mathop{-_\Int} 1$,
       $\symb{p}^{\cMlia}(x) = x - 1$,
%       ${+}^{\cMlia}(x,y) = x \mathop{+_\Int} y$,
%       ${-}^{\cMlia}(x,y) = x \mathop{-_\Int} y$,
       and $>$ and $\geq$
%, $<$, and $\leq$ 
are interpreted as the corresponding comparison predicates
%, ``greater than'' and ``greater than or equal to'', over $\Int$
 in the usual way.
% \item ${\EqualityPred}^{\cMlia}(x,y) = x \mathop{=_\Int} y$,
%       ${\DisequalityPred}^{\cMlia}(x,y) = x \mathop{\DisequalityPred_\Int} y$,
%       ${>}^{\cMlia}(x,y) = x \mathop{>_\Int} y$,
%       ${\geq}^{\cMlia}(x,y) = x \mathop{\geq_\Int} y$,
%       ${<}^{\cMlia}(x,y) = x \mathop{<_\Int} y$, 
%       ${\leq}^{\cMlia}(x,y) = x \mathop{\leq_\Int} y$.
Then, we have that 
 $(\symb{s}(\symb{s}(\symb{0})))^{\cMlia} = 2$,
 $(\symb{s}(\symb{p}(\symb{p}(\symb{s}(\symb{0})))))^{\cMlia} = 0$, and
 so on.
%The constrained TRS
$\cRconst$ in Section~\ref{sec:intro} is
% over $(\{\symb{f}/3, \symb{g}/1, \symb{h}/2\},\cGlia,\cPlia,\cMlia)$, 
 over $(\{\symb{f}/1\},\cGlia,\cPlia,\cMlia)$, 
 and we have e.g., 
 %the following reduction:
$%  \[
   \symb{f}(\symb{s}^{100}(\symb{0}))
    \mathrel{\to_{\cRMc}}
    \symb{f}(\symb{f}(\symb{s}^{111}(\symb{0})))
    \mathrel{\to_{\cRMc}}
    \symb{f}(\symb{p}^{10}(\symb{s}^{111}(\symb{0})))
	\mathrel{\to^*_{\cRMc}}
    \symb{f}(\symb{s}^{101}(\symb{0}))
	\mathrel{\to_{\cRMc}}
    \symb{p}^{10}(\symb{s}^{101}(\symb{0}))
	\mathrel{\to^*_{\cRMc}}
    \symb{s}^{91}(\symb{0})
$. % \]
% \[
%   \symb{f}(\symb{p}(\symb{0}),\symb{s}(\symb{s}(\symb{0})),\symb{0})
%    \to_{\cRconst}
%    \symb{f}(\symb{s}(\symb{s}(\symb{0})),\symb{0},\symb{g}(\symb{p}(\symb{0})))
%    \to_{\cRconst}
%    \symb{f}(\symb{s}(\symb{s}(\symb{0})),\symb{0},\symb{p}(\symb{0}))
%    \to_{\cRconst}
%    \symb{s}(\symb{s}(\symb{0}))
% \]
% Note that $\cRconst|_{\cGlia}=\{~ \Rule{\symb{s}(\symb{p}(x))}{x}, ~\Rule{\symb{p}(\symb{s}(x))}{x} ~\}$. %\cRint$.
\end{example}

We assume that $\cR$ is \emph{locally sound for
$\cM$}~\cite{SNSSK09,SNS11}, i.e., for every 
rule $\CRule{\ell}{r}{\varphi} \in \cR$,
if the root symbol of $\ell$ is in $\cG$, %$\Root(\ell) \in \cG$,
 then
%\begin{itemize}
% \item 
$r$ and all the proper subterms of $\ell$ are in $T(\cG,\cV)$, 
       and
% \item
the formula $\varphi \limplies (\ell \mathop{\EqualityPred} r)$ is $\cM$-valid.
%\end{itemize}
Local soundness for $\cM$ ensures consistency for the semantics and further
that no interpreted ground term is reduced to any term containing an uninterpreted 
function symbol.
This property is implicitly assumed in other formalizations of constrained 
rewriting by e.g., rules for constructors are separated from user-defined 
rules~\cite{BJ08,FK08},
or rules are defined for uninterpreted function symbols only~\cite{KN13frocos}.

%%%%%%%%%%%%%%%%%%%%%%%%%%%%%%%%%%%%%%%%%%%%%%%%%%%%%%%%%%%%%%%%%%%%%%%%%%
\section{The DP Framework for Constrained TRSs}
\label{sec:DP-framework}
%%%%%%%%%%%%%%%%%%%%%%%%%%%%%%%%%%%%%%%%%%%%%%%%%%%%%%%%%%%%%%%%%%%%%%%%%%

In this section, we recall the \emph{DP framework}
for constrained TRSs~\cite{SNS11}, which is a straightforward extension of the DP
framework~\cite{GTS04,FK08} for TRSs to constrained TRSs.  

In the following, 
%we consider signatures $\cF$ and $\cG$, a set $\cPred$
%of predicate symbols, and a structure $\cM$ for $\cG$ and $\cPred$, and
we let $\cR$ be a constrained TRS over $(\cF,\cG,\cPred,\cM)$
unless noted otherwise. 
We introduce a \emph{marked symbol}
$\Marked{f}$ for each defined symbol $f$ of $\cR$, where $\Marked{f}\notin \cF\cup\cG$.
We denote the set of marked symbols by $\Marked{\cD_\cR}$.
% and the set $\cF\cup\Marked{\cD_\cR}$ by $\Marked{\cF}$. 
For a term $t$ of the form $f(t_1,\ldots,t_n)$ in $T(\cF\cup\cG,\cV)$ with $f/n \in \cD_\cR$,
we denote $\Marked{f}(t_1,\ldots,t_n)$ (a marked term) by
$\Marked{t}$. 
To make it clear whether a term is marked, we often attach explicitly
the mark $\sharp$ to meta variables for marked terms.
% e.g., a term ``$s$'' represents a
%unmarked term in $T(\cF\cup\cG,\cV)$, and ``$\Marked{s}$'' a marked term.
A \emph{constrained marked pair} over $(\cF\cup\Marked{\cD_\cR},\cG,\cPred)$ is a triple
$(\Marked{s},\Marked{t},\varphi)$, denoted by
$\CRule{\Marked{s}}{\Marked{t}}{\varphi}$, such that $s$ and $t$ are terms
in $T(\cF\cup\cG,\cV)$, both $s$ and $t$ are rooted by defined symbols
of $\cR$, and $\Var(s) \supseteq \Var(t) \cup \FVar(\varphi)$.
When $\varphi$ is $\top$, we may write $\Rule{\Marked{s}}{\Marked{t}}$
instead of $\CRule{\Marked{s}}{\Marked{t}}{\varphi}$. 
A constrained marked pair $\CRule{\Marked{s}}{\Marked{t}}{\varphi}$ is
called a \emph{dependency pair} %(DP, for short) 
of $\cR$ if there exists a renamed variant
$\CRule{s}{C[t]}{\varphi}$ of a rewrite rule in $\cR$.
We denote the set of dependency pairs of $\cR$ by $\DP(\cR)$. 
%\OKD{Dependency pairs of $\cR$ may be transformed into other marked pairs that are not
%precisely ``dependency pairs of $\cR$'', but
%for the readability, we abuse the terminology ``(constrained) dependency pairs'' for 
%``(constrained) marked pairs''.}
In the following, we let $\cS$ be a set of dependency pairs of $\cR$ unless noted otherwise. 

A (possibly infinite) derivation %sequence
$\Marked{s_0}\sigma_0 \to_{\varepsilon,\cS} \Marked{t_0}\sigma_0
\to^*_{>\varepsilon,\cR}\Marked{s_1}\sigma_1 \to_{\varepsilon,\cS} \Marked{t_1}\sigma_1
\to^*_{>\varepsilon,\cR} \cdots$ 
with $\sigma_0,\sigma_1,\sigma_2,\ldots \in \Subst(\cF\cup\cG,\cV)$ 
is called a \emph{dependency chain w.r.t.\ $\cS$ and $\cR$} ($\DPproblem{\cS}{\cR}$-chain, for short).
The chain is called 
\emph{infinite} if it contains infinitely many $\to_{\varepsilon,\cS}$ steps, 
and called \emph{minimal} if $\Marked{t_i}\sigma_i$ is terminating
w.r.t.\ $\cR$ for all $i \geq 0$.
We deal with minimal chains only, and chains in this paper are minimal unless noted otherwise.
\begin{theorem}[\cite{SNS11}]
% A constrained TRS 
$\cR$ is terminating iff there is no
infinite % (minimal) 
$\DPproblem{\DP(\cR)}{\cR}$-chain.
\end{theorem}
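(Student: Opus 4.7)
The plan is to prove both directions of the equivalence, the harder half being the construction of an infinite chain from an arbitrary non-terminating $\cR$-derivation.

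For the ``if'' direction, I would argue by contrapositive: from an infinite chain $\Marked{s_0}\sigma_0 \to_{\varepsilon,\DP(\cR)} \Marked{t_0}\sigma_0 \to^*_{>\varepsilon,\cR} \Marked{s_1}\sigma_1 \to_{\varepsilon,\DP(\cR)} \cdots$ I would unmark each dependency-pair step via its generating rewrite rule. By definition of $\DP(\cR)$, each $\CRule{\Marked{s_i}}{\Marked{t_i}}{\varphi_i}$ comes from a rule $\CRule{s_i}{C_i[t_i]}{\varphi_i} \in \cR$; the side-conditions $\sigma_i|_{\FVar(\varphi_i)} \in \Subst(\cG,\cV)$ and $\cM \models \varphi_i\sigma_i$ are inherited from the chain step, so $s_i\sigma_i \to_{\varepsilon,\cR} C_i[t_i\sigma_i]$. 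Since all reductions labelled $\to^*_{>\varepsilon,\cR}$ take place strictly below the marked root, they translate verbatim to $\to_\cR$-reductions after erasing the sharp. Concatenating everything yields an infinite $\cR$-derivation starting from $s_0\sigma_0$, contradicting termination of $\cR$.

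For the ``only if'' direction, I would assume $\cR$ non-terminating and inductively build an infinite minimal chain using \emph{minimally non-terminating} terms, that is, non-terminating terms whose proper subterms all terminate; every non-terminating term contains such a subterm. Starting from one such $t = f(t_1,\ldots,t_n)$, each $t_i$ reduces to some $\cR$-normal form $u_i$, and $f(u_1,\ldots,u_n)$ is still non-terminating. This forces $f \in \cD_\cR$ and the existence of a rule $\CRule{\ell}{r}{\varphi} \in \cR$ with matcher $\sigma$ so that $f(u_1,\ldots,u_n) = \ell\sigma$ and $\sigma$ satisfies the side-conditions of $\to_\cR$. Since $r\sigma$ is non-terminating while every $\sigma(x)$ is a normal form, any minimally non-terminating subterm of $r\sigma$ has the form $(r|_p)\sigma$ for some position $p$ of $r$ with $r|_p$ rooted by a defined symbol. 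Taking $t' = r|_p$ yields a dependency pair $\CRule{\Marked{\ell}}{\Marked{t'}}{\varphi} \in \DP(\cR)$, and the chain is extended by $\Marked{t}\sigma' \to^*_{>\varepsilon,\cR} \Marked{\ell}\sigma \to_{\varepsilon,\DP(\cR)} \Marked{t'}\sigma$, where $\sigma'$ is the substitution witnessing the previous step. Minimality is preserved because $\Marked{t'}\sigma$ has a marked root (so is rewritable only below it) and its proper subterms terminate by choice of $t'\sigma$. Iterating with $t'\sigma$ in place of $t$ produces the desired infinite minimal $\DPproblem{\DP(\cR)}{\cR}$-chain.

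The main obstacle will be keeping the constraint bookkeeping airtight at every inductive step: one must verify that the matcher $\sigma$ extracted from an $\cR$-step on $f(u_1,\ldots,u_n)$ satisfies $\sigma|_{\FVar(\varphi)} \in \Subst(\cG,\cV)$ and $\cM \models \varphi\sigma$, which is however forced directly by the definition of $\to_\cR$. A subtler point is that a minimally non-terminating term cannot have an interpreted root symbol; this is where local soundness of $\cR$ enters, as it ensures rules with an interpreted-rooted LHS behave as value equalities and therefore do not sustain non-terminating interpreted-ground derivations. Discharging these two points, together with confirming that the minimally non-terminating subterm of $r\sigma$ always sits at a non-variable position of $r$ (which uses the fact that $\sigma(x)$ is a normal form for every $x$ in $\Var(\ell)$), is the technical heart of the completeness proof.
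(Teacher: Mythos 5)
Your overall strategy is the classical Arts--Giesl argument, which is also what the cited source follows (the paper itself imports this theorem from \cite{SNS11} without reproving it). The direction from an infinite chain to non-termination is fine: unmarking each $\to_{\varepsilon,\DP(\cR)}$ step via its generating rule, inheriting the constraint side-conditions, and splicing the below-root reductions inside the contexts $C_i$ gives an infinite $\cR$-derivation. However, there is a genuine flaw in the other direction. From a minimally non-terminating $t=f(t_1,\ldots,t_n)$ you reduce each argument to ``some $\cR$-normal form $u_i$'' and assert that $f(u_1,\ldots,u_n)$ is still non-terminating. That step fails in general: normalizing the arguments can destroy non-termination. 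For instance, with the (unconstrained, uninterpreted) rules $\Rule{\symb{a}}{\symb{b}}$ and $\Rule{\symb{g}(\symb{a})}{\symb{g}(\symb{a})}$, the term $\symb{g}(\symb{a})$ is minimally non-terminating, but the only argument-normalized term $\symb{g}(\symb{b})$ is a normal form and hence terminating. The correct move is to follow one fixed infinite derivation from $t$: since every $t_i$ terminates, only finitely many steps of that derivation can occur strictly below the root, so it eventually performs a root step $\ell\sigma \to_{\varepsilon,\cR} r\sigma$ on a reduct $f(u_1,\ldots,u_n)=\ell\sigma$ of $t$ that lies \emph{on} the infinite derivation and is therefore still non-terminating; the $u_i$ are merely terminating, not normal forms.

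The same repair takes care of the later step where you invoke ``every $\sigma(x)$ is a normal form'' to place the minimally non-terminating subterm of $r\sigma$ at a non-variable position of $r$: it suffices that each $\sigma(x)$ with $x\in\Var(\ell)$ is \emph{terminating} (it is a subterm of some terminating $u_i$), since a non-terminating term cannot occur as a subterm of a terminating one. With that correction the remainder of your argument goes through: the constraint bookkeeping is indeed forced by the definition of $\to_\cR$, local soundness handles interpreted-rooted candidates, and minimality of the chain is preserved because $\Marked{f}\notin\cF\cup\cG$ forces all reductions of $\Marked{t'}\sigma$ to happen below the root. A cosmetic point: your labels ``if'' and ``only if'' are swapped relative to the statement---the half you call ``if'' (infinite chain implies non-termination) is the ``only if'' half of the theorem.
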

%\comment{Now take care of "minimality"}
%To simplify the proof for the soundness and completeness of the
%NC processor, we do not take care of minimality of chains. 

A pair $(\cS,\cR)$ of sets of dependency pairs and constrained
rewrite rules is called a \emph{DP problem}.
We denote a DP problem $(\cS,\cR)$ by $\cS$ because in this paper, we do
not modify $\cR$.
%Notice that $\cS$ also represents the DP problem $(\cS,\cR)$.
A DP problem $\DPproblem{\cS}{\cR}$ is called \emph{finite} if there is
no infinite % (minimal)
 $\DPproblem{\cS}{\cR}$-chain, and called \emph{infinite} if the DP problem
 is not finite or $\cR$ is not terminating. 
 Note that there are DP problems which are both finite and infinite
 (see~\cite{GTSF06}).
A DP problem $\DPproblem{\cS}{\cR}$ is called \emph{trivial} if
$\cS=\emptyset$.
A \emph{DP processor} is a function which
takes a DP problem as input and returns a finite set of DP problems. 
A DP processor $\Proc$ is called \emph{sound} if for any DP problem
$\DPproblem{\cS}{\cR}$, the DP problem is finite whenever all the DP problems in
$\Proc(\DPproblem{\cS}{\cR})$ are finite. 
$\Proc$ is called \emph{complete} if for any DP problem $\DPproblem{\cS}{\cR}$,
the DP problem is infinite whenever there exists an infinite DP problem in
$\Proc(\DPproblem{\cS}{\cR})$. 
The \emph{DP framework} is a method to
prove/disprove the finiteness of DP problems:%
\footnote{
In this paper, we do not consider disproving termination,
and thus, we do not formalize the case where DP processors return ``no''~\cite{GTSF06}.}
given a constrained TRS $\cR$, if the \emph{initial} DP problem
$\DPproblem{\DP(\cR)}{\cR}$ is decomposed into trivial DP problems by sound
DP processors, then the framework succeeds in proving termination of
$\cR$. 
%When there are two problems $\DPproblem{\cS_1}{\cR}$ and
%$\DPproblem{\cS_2}{\cR}$ with $\cS_1 \supseteq \cS_2$, 
%it suffices to consider $\DPproblem{\cS_1}{\cR}$ because
%$\DPproblem{\cS_1}{\cR}$ covers $\DPproblem{\cS_2}{\cR}$ as a DP problem. 
%We may drop such a subsumed problem. % without noticing. 

%\comment[TODO]{Merge the definitions of the following DP processors,
%and the theorems for their correctness.}

In the rest of this section, we briefly introduce the DP processor based on %\emph{strongly connected components} (SCCs, for short) %, the subterm criterion, and
\emph{polynomial interpretations} (PI, for short),
which is an extension of those in the DP framework for TRSs.

The \emph{PI-based} processor in~\cite{SNS11} is defined for constrained TRSs with
an LIA-structure $\cMint$ with binary predicate symbols $>$ and $\geq$.
Given a signature $\Sigma=\cF\uplus\cGint$ with $\cGint \supseteq \{+,-\}$, we define a \emph{linear polynomial interpretation}\/%
\footnote{
We consider ``linear'' polynomials only because we use PIs over the integers, which may have negative coefficients, and we interpret ground terms containing nests of defined symbols.}
 $\Pol$ \emph{for a subsignature $\cF' \subseteq \cF$}
via $\cGint$ as follows:
\begin{itemize}
  \item for any $n$-ary function symbol $f$ in $\cF'$, $\Pol(f)$ is a term in $T(\cGint,\{x_1,\ldots,x_n\})$ that represents a linear polynomial.
%which is written as an arithmetic expression such as $x_1+1$.
\end{itemize}
Note that 
%$\cGint$ and $\cMint$ are arbitrary signatures satisfying the above requirement, and
$\cGint$ and $\cMint$ may be different from $\cGlia$ and $\cMlia$ in Example~\ref{ex:Glia}.
For readability, we use usual mathematical notions for terms in $T(\cGint,\cV)$, e.g., $100$ for $\symb{s}^\symb{100}(\symb{0})$, $2x$ for $x + x$, and so on.
In the following, given an $n$-ary symbol $f$ in $\cF'$, we write $a_0 + a_1x_1 + \cdots + a_n x_n$ for $\Pol(f)$
where $a_0,a_1,\ldots,a_n \in \Int$.
We apply $\Pol$ for $\cF'$ to arbitrary terms in $T(\cF\cup\cGint,\cV)$ as follows:
%\begin{itemize}
% \item 
$\Pol(x) = x$ for $x \in \cV$;
% \item 
$\Pol(f(t_1,\ldots,t_n)) = \Pol(f)\{ x_i \mapsto \Pol(t_i) \mid 1
       \leq i \leq n\}$ if $\Pol(f)$ is defined (i.e., $f \in \cF'$), 
and otherwise, 
% \item 
$\Pol(f(t_1,\ldots,t_n)) = f(\Pol(t_1),\ldots,\Pol(t_n))$.
%if $\Pol(f)$ is not defined.
%\end{itemize}
In the following, we use $\cR$ as a constrained TRS over $(\cF,\cGint,\cPint,\cMint)$ without notice. 
To simplify the presentation, we introduce a weaker version of the
PI-based processor in~\cite{SNS11}.
\begin{definition}[\cite{SNS11}]
 \label{def:PIProc}
% Suppose that $\cG=\cGint$, $\cGint$ includes binary function symbols $+$ and $-$ interpreted as ``addition'' and ``subtraction'' over integers, respectively, and $\cPred$ includes binary predicate symbols $>$ and $\geq$ interpreted as ``greater than'' and ``greater than or equal to'' over integers, respectively. 
%Let $\DPproblem{\cS}{\cR}$ be a DP problem, and 
Let $\Pol$ be a linear PI for
 $\Marked{\cD_\cR}$%
 \footnote{
 $\Pol$ is not defined for any symbols in $\cF$.} such that
\begin{itemize}
 \item[\rm\CondInterpreted]
 $\Pol(\Marked{s}),\Pol(\Marked{t})\in T(\cGint,\cV)$ for all $\CRule{\Marked{s}}{\Marked{t}}{\varphi}\in\cS$,%
 \item[\rm\CondPreserve] 
 $\Var(\Pol(\Marked{t})) \subseteq \FVar(\varphi)\cup\Var(\Pol(\Marked{s}))$ for all $\CRule{\Marked{s}}{\Marked{t}}{\varphi}\in \cS$,
 and
 \item[\rm\CondSdec] 
 $\varphi \Rightarrow \Pol(\Marked{s}) \geq \Pol(\Marked{t})$ is $\cMint$-valid for all $\CRule{\Marked{s}}{\Marked{t}}{\varphi}\in \cS$.
\end{itemize}
Then, the \emph{PI-based processor} $\PIProc$ is defined as follows:
\[
 \PIProc(\DPproblem{\cS}{\cR}) =
 \{~
   \DPproblem{\cS\setminus\Pgtr}{\cR},
   ~ \DPproblem{\cS\setminus\Pbound}{\cR},
   ~ \DPproblem{\cS\setminus\Pfilter}{\cR}
   ~\}
\]
where
\begin{itemize}
% \item $\Pgtr = \{ \CRule{s}{t}{\varphi} \in \cS \mid 
%       (\forall \CRule{u}{v}{\psi} \in \Succ_E(\CRule{s}{t}{\varphi}).\ 
%       \mbox{$\Chain(\CRule{s}{t}{\varphi},\CRule{u}{v}{\psi},\Var(\Pol(s)))
%       \limplies \Pol(s) > \Pol(t)$ is $\cMint$-valid})
%       \}$,
 \item $\Pgtr = \{ \CRule{\Marked{s}}{\Marked{t}}{\varphi} \in \cS \mid 
       \mbox{$\varphi \limplies \Pol(\Marked{s}) > \Pol(\Marked{t})$ is $\cMint$-valid}\,
       \}$,
% \item $\Pbound = \{ \CRule{s}{t}{\varphi} \in \cS \mid 
%       \exists n \in T(\cGint).\ 
%       (\forall \CRule{u}{v}{\psi} \in \Succ_E(\CRule{s}{t}{\varphi}).\ 
%       \mbox{$\Chain(\CRule{s}{t}{\varphi},\CRule{u}{v}{\psi},\Var(\Pol(s)))
%       \limplies \Pol(s) \geq n$ is $\cMint$-valid})
%       \}$,%
%       \footnote{For $n$, it suffices to consider a ground term which
%       is interpreted as $0$.}
 \item $\Pbound = \{ \CRule{\Marked{s}}{\Marked{t}}{\varphi} \in \cS \mid 
       \mbox{$\varphi \limplies \Pol(\Marked{s}) \geq c_0$ is $\cMint$-valid for some $c_0 \in T(\cGint)$}\,
       \}$,%
       \footnote{
       To simplify discussion, we consider a common ground term $c_0$ (the minimum one) such that $\varphi \limplies \Pol(\Marked{s}) > c_0$ is $\cMint$-valid for all $\CRule{\Marked{s}}{\Marked{t}}{\varphi} \in \Pbound$.
       }
%       where $c_\symb{0} \in T(\cGint)$, and $(c_\symb{0})^{\cMint}=0$,
       and
 \item $\Pfilter = \{ \CRule{\Marked{s}}{\Marked{t}}{\varphi} \in \cS \mid 
       \Var(\Pol(\Marked{s})) \subseteq \FVar(\varphi)\}$.
       \end{itemize}
\end{definition}
For a ground $\DPproblem{\cS}{\cR}$-chain, the assumptions {\CondInterpreted}, {\CondPreserve}, {\CondSdec} and the sets $\Pgtr$, $\Pbound$, $\Pfilter$ play the following role:
\begin{itemize}
	\item 
{\CondInterpreted} ensures that all the uninterpreted function symbols in
$\cS$ are dropped by applying $\Pol$ to pairs in $\cS$.
However, the application of $\Pol$ to an instance of a pair in $\cS$ may
contain an uninterpreted function symbol.
	\item
	Pairs in $\Pfilter$ ensure the existence of a pair which is reduced by $\Pol$ to an integer.
	\item
{\CondPreserve} %Pairs in $\Ppreserve$ 
ensures that all terms appeared after a rewrite step of $\Pfilter$ can be reduced by $\Pol$ to integers,
i.e., a suffix of the $\DPproblem{\cS}{\cR}$-chain can be converted by $\Pol$ to a sequence of integers.
	\item
{\CondSdec} ensures that the sequence of integers is decreasing.
	\item
Pairs in $\Pgtr$ ensure that the sequence obtained by taking integers corresponding to rewrite steps of $\Pgtr$ is strictly decreasing.
	\item 
Pairs in $\Pbound$ ensure the 
existence of a lower bound for the decreasing sequence if the $\DPproblem{\cS}{\cR}$-chain has infinitely many $\to_{\varepsilon,\Pbound}$-steps.
\end{itemize}
To make $\DPproblem{\cS}{\cR}$ smaller via $\PIProc$,
we need $\Pgtr\ne\emptyset$, $\Pbound\ne\emptyset$, and $\Pfilter \ne \emptyset$.
The idea of the PI-based processor in Definition~\ref{def:PIProc} is that
an infinite $\DPproblem{\cS}{\cR}$-chain which contains each pair in
$\Pgtr \cup \Pbound \cup \Pfilter$ infinitely many times 
can be transformed into an infinite bounded strictly-decreasing sequence of integers, while such a sequence does not exist.
%which has a lower bound,
% i.e., all the elements in the sequence is greater than or equal to a lower bound. 

\begin{theorem}[\cite{SNS11}]
\label{thm:soundness-completeness_of_PIproc}
%$\SCCProc$ %, $\SCProc$, and 
$\PIProc$ is sound and complete.
\end{theorem}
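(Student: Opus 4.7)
Completeness is immediate. Since $\Pgtr, \Pbound, \Pfilter \subseteq \cS$, for each $X \in \{\Pgtr,\Pbound,\Pfilter\}$ we have $\cS \setminus X \subseteq \cS$, so every infinite $\DPproblem{\cS\setminus X}{\cR}$-chain is also an infinite $\DPproblem{\cS}{\cR}$-chain; furthermore $\cR$ is unchanged, so nontermination of $\cR$ carries over to the input. Hence if some output problem is infinite (by either disjunct of the definition of ``infinite''), so is the input.

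For soundness I argue the contrapositive: if $\DPproblem{\cS}{\cR}$ is not finite, then some output problem is not finite. Let $\mathcal{C}$ be an infinite minimal $\DPproblem{\cS}{\cR}$-chain. Suppose, toward a contradiction, that none of $\cS \setminus \Pgtr$, $\cS \setminus \Pbound$, $\cS \setminus \Pfilter$ admits an infinite chain; then for each $X \in \{\Pgtr,\Pbound,\Pfilter\}$ the chain $\mathcal{C}$ must use pairs from $X$ infinitely often, since otherwise some suffix of $\mathcal{C}$ (still minimal) would avoid $X$ entirely and witness infiniteness of $\DPproblem{\cS\setminus X}{\cR}$. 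The aim is now to transport $\mathcal{C}$ through $\Pol$ to an infinite descending integer sequence bounded below, contradicting well-foundedness of $>$ on $\Int$.

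The core invariant is that from some index onward, every $\Pol(\Marked{s_i}\sigma_i)$ and $\Pol(\Marked{t_i}\sigma_i)$ is a ground interpreted term and hence evaluates to an integer. The base case comes from the first $\Pfilter$-step $i$: there $\Var(\Pol(\Marked{s_i})) \subseteq \FVar(\varphi_i)$, and the definition of the rewrite relation forces $\sigma_i|_{\FVar(\varphi_i)} \in \Subst(\cGint,\cV)$, so $\Pol(\Marked{s_i}\sigma_i) \in T(\cGint)$; by {\CondPreserve} the same holds for $\Pol(\Marked{t_i}\sigma_i)$. The property propagates across subsequent DP-steps via {\CondPreserve}, and across $\cR$-segments $\Marked{t_i}\sigma_i \to^*_{>\varepsilon,\cR} \Marked{s_{i+1}}\sigma_{i+1}$ because $\Pol$ drops the reducible positions of marked symbols: the arguments at the kept positions of the root marked symbol are interpreted ground terms, and by local soundness of $\cR$ for $\cMint$ their $\cMint$-values are invariant under rewriting, so $\Pol(\Marked{t_i}\sigma_i) = \Pol(\Marked{s_{i+1}}\sigma_{i+1})$ as integers. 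Once the integer sequence is in hand, {\CondSdec} makes it nonincreasing, the infinitely many $\Pgtr$-steps contribute strict decreases, and the infinitely many $\Pbound$-steps provide a common lower bound $c_0$ that necessarily bounds every sufficiently late term (a nonincreasing sequence cannot rise back above a value it has already passed). This yields an infinite strictly descending integer sequence bounded below, which is impossible.

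I expect the main obstacle to be the preservation of $\Pol$-values across the $\cR$-segments. Making precise that instantiation followed by rewriting strictly below the root of a marked term leaves the $\Pol$-image unchanged requires an auxiliary lemma that fuses the ``drops reducible positions'' property of $\Pol$ with local soundness of $\cR$; once that preservation lemma is available, extracting the integer sequence, bounding it, and exhibiting infinite strict descent follow the standard DP-framework template.
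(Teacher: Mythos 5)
Your argument matches the paper's treatment: the theorem itself is imported from~\cite{SNS11}, and the paper's proof of the analogous Theorem~\ref{thm:soundness-completeness_of_newPIproc} uses exactly your contrapositive/suffix reasoning (packaged there as Lemma~\ref{lem:S-innumerable} on $\cS'$-innumerable chains) followed by the same transformation of the chain into a bounded, infinitely-often-strictly-decreasing integer sequence, with $\Pfilter$ and {\CondPreserve} supplying interpretability and local soundness giving equality of $\Pol$-values across the $\to^*_{>\varepsilon,\cR}$ segments. The one point to tighten is that $\sigma_i|_{\FVar(\varphi_i)} \in \Subst(\cGint,\cV)$ does not by itself make $\Pol(\Marked{s_i}\sigma_i)$ ground---you must first instantiate the infinite chain with interpreted ground terms (e.g., $\symb{0}$), as the paper does explicitly before evaluating under $\cMint$.
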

%Note that the theorems in~\cite{SNS11} assume $\cS\subseteq\DP(\cR)$, but this assumption is used to ensure that $\cS$
%is a set of marked pairs.
%Thus, we do not assume $\cS\subseteq\DP(\cR)$.

\begin{example}
\label{ex:ack}
Consider the following constrained TRS defining Ackermann function over the integers, while $\symb{ack}$ is not defined for negative integers:
\[
 \cRack = 
 \left\{
 \begin{array}{r@{\>}c@{\>}l@{~~}c@{}c@{}c}
 \CRule{\symb{ack}(x,y) &}{& \symb{s}(y) &}{& x = \symb{0} \land y \geq \symb{0} &} \\
 \CRule{\symb{ack}(x,y) &}{& \symb{ack}(\symb{p}(x),\symb{s}(\symb{0})) &}{& x > \symb{0} \land y = \symb{0} &} \\
 \CRule{\symb{ack}(x,y) &}{& \symb{ack}(\symb{p}(x),\symb{ack}(x,\symb{p}(y))) &}{& x > \symb{0} \land y > \symb{0} &} \\
 \end{array}
 \right\}
 \cup
 \cRsp
\]
The following are the dependency pairs of $\cRack$:
\[
 \DP(\cRack) = 
 \left\{
 \begin{array}{rr@{\>}c@{\>}l@{~~}c@{}c@{}c}
 (7) & \CRule{\symb{ack}(x,y) &}{& \Marked{\symb{s}}(y) &}{& x = \symb{0} \land y \geq \symb{0} &} \\
 (8) & \CRule{\Marked{\symb{ack}}(x,y) &}{& \Marked{\symb{ack}}(\symb{p}(x),\symb{s}(\symb{0})) &}{& x > \symb{0} \land y = \symb{0} &} \\
 (9) & \CRule{\Marked{\symb{ack}}(x,y) &}{& \Marked{\symb{p}}(x) &}{& x > \symb{0} \land y = \symb{0} &} \\
 (10) & \CRule{\Marked{\symb{ack}}(x,y) &}{& \Marked{\symb{s}}(\symb{0}) &}{& x > \symb{0} \land y = \symb{0} &} \\
 (11) & \CRule{\Marked{\symb{ack}}(x,y) &}{& \Marked{\symb{ack}}(\symb{p}(x),\symb{ack}(x,\symb{p}(y))) &}{& x > \symb{0} \land y > \symb{0} &} \\
 (12) & \CRule{\Marked{\symb{ack}}(x,y) &}{& \Marked{\symb{p}}(x) &}{& x > \symb{0} \land y > \symb{0} &} \\
 (13) & \CRule{\Marked{\symb{ack}}(x,y) &}{& \Marked{\symb{ack}}(x,\symb{p}(y)) &}{& x > \symb{0} \land y > \symb{0} &} \\
 (14) & \CRule{\Marked{\symb{ack}}(x,y) &}{& \Marked{\symb{p}}(y) &}{& x > \symb{0} \land y > \symb{0} &} \\
 \end{array}
 \right\}
\]
By using the DP processor based on \emph{strongly connected components} (cf.~\cite{SNS11}), we can drop (7), (9), (10), (12), and (14) from the initial DP problem $\DPproblem{\DP(\cRack)}{\cRack}$, obtaining the DP problem $\DPproblem{\{(8),(11),(13)\}}{\cRack}$.
Let us try to prove finiteness of the DP problem $\DPproblem{\{(8),(11),(13)\}}{\cRack}$.
%termination of $\cRack$ by using $\PIProc$ only.
Let $\Pol$ be a linear PI such that $\Pol(\Marked{\symb{ack}}) = x_1$.
Then, the assumptions {\CondInterpreted}, {\CondPreserve}, and {\CondSdec} in Definition~\ref{def:PIProc} are satisfied, and we have that
$\Pgtr = \{ (8), (11) \}$ and $\Pbound=\Pfilter=\{(8),(11),(13)\}$.
Thus, $\PIProc(\DPproblem{\{(8),(11),(13)\}}{\cRack})=\{ \DPproblem{\{(13)\}}{\cRack}, \DPproblem{\emptyset}{\cRack} \}$.
Let $\Pol'$ be a linear PI such that $\Pol(\Marked{\symb{ack}}) = x_2$.
Then, the assumptions {\CondInterpreted}, {\CondPreserve}, and {\CondSdec} in Definition~\ref{def:PIProc} are satisfied, and we have that
$\Pgtr = \Pbound=\Pfilter=\{(13)\}$.
Thus, $\PIProc(\DPproblem{\{(13)\}}{\cRack}) = \{ \DPproblem{\emptyset}{\cRack} \}$.
Therefore, $\cRack$ is terminating.
\end{example}

\begin{example}
\label{ex:PI}
Consider $\cRMc$ and its dependency pairs $\DP(\cRMc)$ in Section~\ref{sec:intro} again.
By using the DP processor based on strongly connected components, we can drop (5) and (6) from the initial DP problem $\DPproblem{\DP(\cRMc)}{\cRMc}$, obtaining the DP problem $\DPproblem{\{(3),(4)\}}{\cRMc}$.
Let us try to apply $\PIProc$ to the DP problem $\DPproblem{\{(3),(4)\}}{\cRMc}$.
Let $\Pol$ be a linear PI such that $\Pol(\Marked{\symb{f}})= a_0 + a_1x_1$.
To satisfy {\CondInterpreted},
%the condition 
%``$\Pol(\Marked{s}),\Pol(\Marked{t})\in T(\cGint,\cV)$ for all
%$\CRule{\Marked{s}}{\Marked{t}}{\varphi}\in\cS$'',
$a_1$ has to be $0$ since $\symb{f} \notin \Marked{\cD_\cR}$.
Thus, 
%we set $a_1=0$, that is,
 $\Pol(\Marked{\symb{f}})= a_0$,
and hence $\Pgtr = \emptyset$.
Therefore,
$\PIProc(\DPproblem{\{(3),(4)\}}{\cRMc}) =
\{~
\DPproblem{\{(3),(4)\}}{\cRMc}
~\}
$ 
and $\PIProc$ does not work for the DP problem $\DPproblem{\DP(\cRMc)}{\cRMc}$.
Note that the other DP processors based on strongly connected components or \emph{the subterm criterion} (cf.~\cite{SNS11}) do not work for this DP problem, either. 
\end{example}

%%%%%%%%%%%%%%%%%%%%%%%%%%%%%%%%%%%%%%%%%%%%%%%%%%%%%%%%%%%
\section{From Dependency Chains to Monotone Sequences of Integers}
\label{sec:improvement}

PIs satisfying the conditions in Definition~\ref{def:PIProc} transform $\DPproblem{\cS}{\cR}$-chains into bounded decreasing sequences of integers.
Focusing on such PIs, we obtain the following corollary from Definition~\ref{def:PIProc} and Theorem~\ref{thm:soundness-completeness_of_PIproc}.
\begin{corollary}
%Suppose that $\cG=\cGint$, $\cGint$ includes binary function symbols $+$ and $-$ interpreted as ``addition'' and ``subtraction'' over integers, respectively, and $\cPred$ includes binary predicate symbols $>$ and $\geq$ interpreted as ``greater than'' and ``greater than or equal to'' over integers, respectively. 
Let $\Pol$ be a linear PI\/ for
 $\Marked{\cD_\cR}$ such that
%$\Pol(\Marked{s}),\Pol(\Marked{t})\in T(\cGint,\cV)$ for all $\CRule{\Marked{s}}{\Marked{t}}{\varphi}\in\cS$, and both
     {\rm\CondInterpreted}, {\rm\CondPreserve}, and {\rm\CondSdec} in Definition~\ref{def:PIProc} hold.
Then, every ground $\DPproblem{\cS}{\cR}$-chain $\Marked{s_0}\sigma_0 \to_{\varepsilon,\cS} \Marked{t_0}\sigma_0
\to^*_{>\varepsilon,\cR}\Marked{s_1}\sigma_1 \to_{\varepsilon,\cS} \Marked{t_1}\sigma_1
\to^*_{>\varepsilon,\cR} \cdots$ starting with $\CRule{\Marked{s_0}}{\Marked{t_0}}{\varphi_0}$ satisfying $\Var(\Pol(\Marked{s_0})) \subseteq \Var(\varphi_0)$
can be transformed into a decreasing sequence $\Pol(\Marked{s_0}\sigma_0) \geq \Pol(\Marked{t_0}\sigma_0) \geq \Pol(\Marked{s_1}\sigma_1) \geq \Pol(\Marked{t_1}\sigma_1) \geq \cdots$ of integers such that
\begin{itemize}
	\item $>$ appears infinitely many times if $\CRule{\Marked{s}}{\Marked{t}}{\varphi} \in \Pgtr$ in Definition~\ref{def:PIProc} appears in the $\DPproblem{\cS}{\cR}$-chain infinitely many times, 
		and
	\item the sequence is bounded (i.e., there exists an integer $n$ such that $\Pol(\Marked{s_i}) \geq n$ for all $i$) if $\CRule{\Marked{s}}{\Marked{t}}{\varphi} \in \Pbound$ in Definition~\ref{def:PIProc} appears in the $\DPproblem{\cS}{\cR}$-chain infinitely many times.
\end{itemize}
\end{corollary}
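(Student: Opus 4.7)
The plan is to mirror the soundness argument for $\PIProc$ (Theorem~\ref{thm:soundness-completeness_of_PIproc}): walk along the given chain and establish, by induction on $i$, that (a) every $\Pol(\Marked{s_i}\sigma_i)$ and $\Pol(\Marked{t_i}\sigma_i)$ is a ground term in $T(\cGint)$, and (b) consecutive values satisfy the required $\geq$, with $>$ whenever the step is induced by a pair in $\Pgtr$.

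For the base case of (a), the hypothesis $\Var(\Pol(\Marked{s_0})) \subseteq \FVar(\varphi_0)$ combined with the rewrite semantics ($\cMint \models \varphi_0\sigma_0$ forces $\sigma_0|_{\FVar(\varphi_0)} \in \Subst(\cGint)$) and {\CondInterpreted} yields $\Pol(\Marked{s_0}\sigma_0) \in T(\cGint)$. For the inductive step on the DP side, {\CondPreserve} bounds $\Var(\Pol(\Marked{t_i}))$ by $\FVar(\varphi_i) \cup \Var(\Pol(\Marked{s_i}))$, both of which lie in the domain where $\sigma_i$ takes ground $\cGint$-values, so together with {\CondInterpreted} applied to $\Marked{t_i}$ we obtain $\Pol(\Marked{t_i}\sigma_i) \in T(\cGint)$.

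The DP steps of (b) are immediate: $\Marked{s_i}\sigma_i \to_{\varepsilon,\cS} \Marked{t_i}\sigma_i$ implies $\cMint \models \varphi_i\sigma_i$, so instantiating the $\cMint$-valid implication from {\CondSdec} with $\sigma_i$ yields $\Pol(\Marked{s_i}\sigma_i) \geq \Pol(\Marked{t_i}\sigma_i)$, strictly so when the pair belongs to $\Pgtr$. For the below-root reduction $\Marked{t_i}\sigma_i \to^*_{>\varepsilon,\cR} \Marked{s_{i+1}}\sigma_{i+1}$, both sides share the same marked root $\Marked{f}$, so applying $\Pol$ produces $a_0 + \sum_j a_j \Pol(w_j)$ for the respective arguments. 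Positions $j$ with $a_j = 0$ contribute nothing; the remaining positions carry arguments that, by {\CondInterpreted}, already lie in $T(\cGint,\cV)$, hence become ground $\cGint$-terms after instantiation whose $\cMint$-value is preserved by $\to_\cR$ thanks to local soundness. The two sides therefore evaluate to the same integer, giving the intermediate equalities of Table~\ref{tbl:overview}.

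Chaining these inequalities and equalities yields the decreasing sequence in the statement; each occurrence of a pair from $\Pgtr$ forces a strict decrease, and an occurrence of a pair from $\Pbound$ supplies the lower bound $c_0^{\cMint}$ via instantiation of the $\cMint$-valid implication defining $\Pbound$. The main obstacle is the bookkeeping of the two-sided interaction between instantiation and below-root rewriting: reductions can push uninterpreted subterms into argument positions, but {\CondInterpreted} forces any such position to have zero coefficient in $\Pol(\Marked{f})$, so the possibly non-interpretable part never reaches the arithmetic. This is exactly the delicate point already handled in the soundness proof of $\PIProc$, and the corollary is essentially a direct reading of that proof, made explicit at the level of chains.
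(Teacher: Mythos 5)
Your proof is correct and matches the paper's intent: the paper states this corollary without an explicit proof, presenting it as an immediate consequence of Definition~\ref{def:PIProc} and the soundness argument behind Theorem~\ref{thm:soundness-completeness_of_PIproc} from~\cite{SNS11}, and your chain-walking induction (groundedness of the interpreted values via \CondInterpreted{} and \CondPreserve{}, $\geq$/$>$ from \CondSdec{} and $\Pgtr$, equality across $\to^*_{>\varepsilon,\cR}$ steps via zero coefficients and local soundness) is exactly that argument made explicit. The only point to tighten is the last bullet: a single occurrence of a $\Pbound$-pair bounds only the elements preceding it, so you should note that infinitely many occurrences combined with decreasingness yield $\Pol(\Marked{s_j}\sigma_j) \geq c_0^{\cMint}$ for every $j$.
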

To show the non-existence of infinite $\DPproblem{\cS}{\cR}$-chains, it suffices to show the non-existence of infinite ground $\DPproblem{\cS}{\cR}$-chains.
This is because the signature contains an interpreted constant, e.g., $\symb{0}$, and we can make any $\DPproblem{\cS}{\cR}$-chain ground by instantiating the $\DPproblem{\cS}{\cR}$-chain with an interpreted constant.

In this section, we show sufficient conditions of a linear PI for transforming dependency chains into  monotone sequences of integers, strengthening the PI-based processor $\PIProc$.
The difference from $\PIProc$ is to take $\cR$ into account.

%%%%%%%%%%%%%%%%%%%%%%%%%%%%%%%%%%%%%%%%%%%%%%%%%%%%%%%%%%%
\subsection{The Existing Approach to Transformation of Chains into Decreasing Sequences}
\label{subsec:dec-R}

As the first step, we follow the existing approach in~\cite{FK08}.
To this end, we recall the notion of \emph{reducible positions}~\cite{FK08}. 
%\begin{definition}[\cite{FK08}]
A natural number $i$ is a \emph{reducible position} of a marked symbol $\Marked{f}$ w.r.t.\ $\cS$ if there is a dependency pair $\CRule{\Marked{s}}{\Marked{f}(t_1,\ldots,t_n)}{\varphi} \in \cS$ such that $t_i \notin T(\cG,\Var(\varphi))$.%
\footnote{
In~\cite{FK08}, ``$t_i \notin T(\cG,\cV)$'' is required but in this paper, we require a stronger condition ``$t_i\notin T(\cG,\Var(\varphi))$'' that is more essential for this notion.}
%\end{definition}

To extract rewrite sequences of $\cR$ in transforming chains into sequences of integers, 
for an $n$-ary symbol $f$ with $\Pol(\Marked{f})=a_0+a_1x_1+\cdots + a_nx_n$, $\PIProc$ requires {\CondInterpreted}---the coefficient $a_i$ of any reducible position $i$ of $\Marked{f}$ w.r.t.\ $\cS$ to be $0$.
Due to this requirement, in applying $\PIProc$, we do not have to take into account rules in $\cR$.
However, as seen in Example~\ref{ex:PI}, this requirement makes $\PIProc$ ineffective
in the case where all arguments of marked symbols are reducible positions.
For this reason, we relax this requirement as in~\cite{FK08} by making a linear PI $\Pol$ for $\Marked{\cD_\cR}\cup\cF$ satisfy {\CondSdec} and the following conditions:
\begin{itemize}
\item[\CondSubtraction] Any reduction of $\cR$ for uninterpreted symbols in $\cF$ does not happen in the second argument of the subtraction operator, i.e., for any $\CRule{u}{v}{\varphi}\in \cR \cup \cS$ and any subterm $v'$ of $v$, if $v'$ is rooted by the subtraction symbol ``$-$'', then $v'|_2 \in T(\cGint,\FVar(\varphi))$;
\item[\CondPolF] $b_1,\ldots,b_n \geq 0$ for all $f/n \in \cF$ with $\Pol(f) = b_0 + b_1x_1 + \cdots + b_nx_n$;
\item[\CondRdec] $\varphi \Rightarrow \Pol(\ell) \geq \Pol(r)$ is $\cMint$-valid for all $\CRule{\ell}{r}{\varphi}\in \cR$;
\item[\CondPolMdec] $a_i \geq 0$ for any reducible position $i$ of any $\Marked{f}/n$ with $\Pol(\Marked{f}) = a_0 + a_1x_1 + \cdots + a_nx_n$.
\end{itemize}
The first three conditions ensure that for any term $s,t \in T(\cF\cup\cG)$, if $s \mathrel{\to_\cR} t$, then $(\Pol(s))^{\cMint} \geq (\Pol(t))^{\cMint}$ holds.
In addition to the first three conditions, the last condition ensures that for any term $s,t \in T(\cF\cup\cG)$ with $\Root(s) \in \cD_\cR$, if $\Marked{s} \mathrel{\to_\cR} \Marked{t}$, then $(\Pol(\Marked{s}))^{\cMint} \geq (\Pol(\Marked{t}))^{\cMint}$ holds.
Note that {\CondPreserve} and $\Pfilter$ are no longer required because $\Pol$ for $\Marked{\cD_\cR}\cup\cF$ interprets all uninterpreted symbols.

\begin{example}
\label{ex:CondRdec-impossible}
%Consider $\cRMc$ and its dependency pairs $\DP(\cRMc)$ in Section~\ref{sec:intro} again.
Let $\Pol$ be a linear PI such that $\Pol(\Marked{\symb{f}})= a_0 + a_1x_1$ and $\Pol(\symb{f})= b_0 + b_1x_1$ with $a_1 \geq 0$ and $b_1 \geq 0$.
To transform $\DPproblem{\{(3),(4)\}}{\cRMc}$-chains into decreasing sequences of integers, both $\symb{s}^{101}(\symb{0}) > x \Rightarrow \Pol(\symb{f}(x)) \geq \Pol(\symb{f}(\symb{f}(\symb{s}^{11}(x))))$ (i.e., $101 > x \Rightarrow b_0 + b_1 x \geq b_0 + b_1(b_0 + b_1(x + 11))$) and 
$\neg (\symb{s}^{101}(\symb{0}) > x) \Rightarrow \Pol(\symb{f}(x)) \geq \Pol(\symb{p}^{10}(x))$ (i.e., $101 \leq x \Rightarrow b_0 + b_1 x \geq x - 10$) have to be $\cMlia$-valid.
However, there is no assignment for $a_0,a_1,b_0,b_1$ ensuing the validity of the two formulas.
\end{example}

%%%%%%%%%%%%%%%%%%%%%%%%%%%%%%%%%%%%%%%%%%%%%%%%%%%%%%%%%%%
\subsection{Transforming Rewrite Sequences into Increasing Sequences of Integers}
\label{subsec:inc-R}

To preserve monotonicity of linear PIs, we keep the assumption {\CondPolF}.
Under {\CondPolF},
as seen in Example~\ref{ex:CondRdec-impossible}, it is impossible for any linear PI to ensure 
{\CondRdec} for $\cRMc$.
Then, let us try to transform ground rewrite sequences of $\cR$ into \emph{increasing} sequences of integers. 
This is a key idea of improving $\PIProc$.
To transform ground rewrite sequences of $\cR$ into increasing sequences, we require the following condition instead of {\CondRdec}:
\begin{itemize}
\item[\CondRinc] $\varphi \Rightarrow \Pol(\ell) \mathrel{\textcolor{blue}{\leq}} \Pol(r)$ is $\cMint$-valid for any $\CRule{\ell}{r}{\varphi}\in \cR$.
\end{itemize}
%because of $\CRule{\symb{f}(x)}{\symb{f}(\symb{f}(\symb{s}^{11}(x)))}{\symb{s}^{101}(\symb{0}) > x}$.
When transforming both ground dependency chains and ground rewrite sequences of $\cR$ into decreasing sequences, the coefficient for a reducible position (i.e., $a_i$ of $\Pol(\Marked{f})=a_0+a_1x_1 + \cdots + a_nx_n$ with reducible position $i$ of $\Marked{f}$) has to be a non-negative integer because 
any rewrite sequences appears below the reducible position is transformed into a decreasing sequence.
On the other hand, when transforming rewrite sequences of $\cR$ into increasing sequences, 
all coefficients for reducible positions have to be non-positive.
Thus, we modify the assumption {\CondPolMdec} as follows:
%for any $\CRule{\ell}{r}{\varphi}\in \cR$, $\varphi \Rightarrow \Pol(\ell) \leq \Pol(r)$ is valid.
%This is summarized as follows:
\begin{itemize}
%\item Any reduction of $\cR$ for uninterpreted symbols in $\cF$ does not happen in the second argument of the subtraction operator---for any $\CRule{u}{v}{\varphi}\in \cR \cup \cS$ and any subterm ``$t_1-t_2$'' of $v$, $t_2 \in T(\cGint,\FVar(\varphi))$.
%\item Given an $n$-ary symbol $f \in \cF$ with $\Pol(f) = b_0 + b_1x_1 + \cdots + b_nx_n$, $b_i\geq 0$ for all $1 \leq i \leq n$.
\item[\CondPolMinc] $a_i \mathrel{\textcolor{blue}{\leq}} 0$ for any reducible position $i$ of any $\Marked{f}/n$ with $\Pol(\Marked{f}) = a_0 + a_1x_1 + \cdots + a_nx_n$.
\end{itemize}
Under the assumptions {\CondSdec}, {\CondSubtraction}, {\CondPolF}, {\CondRinc}, and {\CondPolMinc}, any ground $\DPproblem{\cS}{\cR}$-chain is transformed into a decreasing sequence of integers.

\begin{example}
%Consider $\cRMc$ and its dependency pairs $\DP(\cRMc)$ in Section~\ref{sec:intro} again.
Let $\Pol$ be a linear PI such that $\Pol(\Marked{\symb{f}})= -1 - x_1$ and $\Pol(\symb{f})= -10 + x_1$.
Then, all {\CondSdec}, {\CondSubtraction}, {\CondPolF}, {\CondRinc}, and {\CondPolMinc} are satisfied, and $\Pgtr=\Pbound=\{~ (3), ~ (4) ~\}$.
This means that every ground $\DPproblem{\{(3),(4)\}}{\cRMc}$-chain %starting with a pair in $\Pfilter$ 
can be transformed into a decreasing sequence of integers such that
%	$(3)$ and/or $(4)$ appear in the chain infinitely many times,
$>$ appears infinitely many times and
the sequence is bounded.
If each of $(3)$ and $(4)$ appears in a ground $\DPproblem{\{(3),(4)\}}{\cRMc}$-chain infinitely many times, then the $\DPproblem{\{(3),(4)\}}{\cRMc}$-chain is transformed into a bounded strictly-decreasing sequence of integers, but such a sequence does not exists.
This means that $(3)$ and $(4)$ appears in any ground $\DPproblem{\{(3),(4)\}}{\cRMc}$-chain finitely many times.
Therefore, there is no infinite ground $\DPproblem{\{(3),(4)\}}{\cRMc}$-chain, and hence $\cRMc$ is terminating.
\end{example}

%%%%%%%%%%%%%%%%%%%%%%%%%%%%%%%%%%%%%%%%%%%%%%%%%%%%%%%%%%%
\subsection{Transforming Dependency Chains into Increasing Sequences of Integers}
\label{subsec:inc-DP}

The role of PI $\Pol$ in $\PIProc$ is to transform dependency chains into bounded \emph{decreasing} sequences of integers, and to drop a dependency pair 
$\CRule{\Marked{s}}{\Marked{t}}{\varphi} \in \cS$ such that $\varphi \limplies \Pol(\Marked{s}) > \Pol(\Marked{t})$ is $\cMint$-valid.
Since transformed sequences are bounded, the sequences do not have to be decreasing, i.e., they may be bounded \emph{increasing} sequences. 
To transform dependency chains into increasing sequences, we invert $\geq$ in {\CondSdec} and $>$ of $\Pgtr$ as follows:
	\begin{itemize}
\item[\CondSinc] $\varphi \Rightarrow \Pol(\Marked{s}) \mathrel{\textcolor{blue}{\leq}} \Pol(\Marked{t})$ is valid for all $\CRule{\Marked{s}}{\Marked{t}}{\varphi}\in \cS$, and
	\item $\Pgtr = \{ \CRule{\Marked{s}}{\Marked{t}}{\varphi} \in \cS \mid 
       \mbox{$\varphi \limplies \Pol(\Marked{s}) \mathrel{\textcolor{blue}{<}} \Pol(\Marked{t})$ is $\cMint$-valid}\,
       \}$.
\end{itemize}
For ground rewrite sequences, we have two ways to transform them (into either decreasing or increasing sequences) and thus, we have the following two combinations to transform dependency chains into increasing sequences:
%Thus, we have the following two cases:
in addition to {\CondSubtraction}, {\CondPolF}, and {\CondSinc},
\begin{itemize}
	\item When transforming ground rewrite sequences into decreasing sequences as in Section~\ref{subsec:dec-R}, we require {\CondRdec} and {\CondPolMinc}.
	
	\item When transforming ground rewrite sequences into increasing sequences as in Section~\ref{subsec:inc-R}, we require {\CondRinc} and {\CondPolMdec}.
\end{itemize}
For the both cases above, to ensure the existence of an upper bound, we need a dependency pair $\CRule{\Marked{s}}{\Marked{t}}{\varphi}\in \cS$ such that $\varphi \Rightarrow \Pol(\Marked{s}) \mathrel{\textcolor{blue}{\leq}} c_0$ is $\cMint$-valid for some $c_0 \in T(\cGint)$.

%%%%%%%%%%%%%%%%%%%%%%%%%%%%%%%%%%%%%%%%%%%%%%%%%%%%%%%%%%%
\subsection{Improving the PI-based Processor}

Finally, we formalize the ideas in previous sections as an improvement of the PI-based processor $\PIProc$.
\begin{definition}
 \label{def:newPIProc}
%Suppose that $\cG=\cGint$, $\cGint$ includes binary function symbols $+$ and $-$ interpreted as ``addition'' and ``subtraction'' over integers, respectively, and $\cPred$ includes binary predicate symbols $>$ and $\geq$ interpreted as ``greater than'' and ``greater than or equal to'' over integers, respectively. 
Let $(\rel_1,\rel_2,\rel_3) \!\in\! \{\DecDec,\IncDec,\DecInc, \IncInc\}$,
and suppose that {\rm\CondSubtraction} 
%for any $\CRule{u}{v}{\varphi}\in \cR \cup \cS$ and any subterm $v'$ of $v$, if $v'$ is rooted by the subtraction symbol ``$-$'', then $v'|_2 \in T(\cGint,\FVar(\varphi))$
holds.
Let $\Pol$ be a linear PI for
 $\Marked{\cD_\cR}\cup\cF$ such that
\begin{itemize}
% \item 	
% $\Pol(\Marked{s}),\Pol(\Marked{t})\in T(\cGint,\cV)$ for all $\CRule{\Marked{s}}{\Marked{t}}{\varphi}\in\cS$,
\item $b_i\geq 0$ for all $1 \leq i \leq n$ and for any $f/n \in \cF$ with $\Pol(f) = b_0 + b_1x_1 + \cdots + b_nx_n$, 
\item $\varphi \Rightarrow \Pol(\ell) \mathrel{\rel_2} \Pol(r)$ is $\cMint$-valid for all $\CRule{\ell}{r}{\varphi}\in \cR$, 
\item $a_i \mathrel{\rel_3} 0$ for any reducible position $i$ of any $\Marked{f}/n$ and $\Pol(\Marked{f}) = a_0 + a_1x_1 + \cdots + a_nx_n$, 
	and
%\end{itemize}
%where
%\begin{itemize}
 \item  
$\varphi \limplies (\Pol(\Marked{s}) \mathrel{\rel_1} \Pol(\Marked{t}) \lor \Pol(\Marked{s}) \simeq \Pol(\Marked{t}))$ is $\cMint$-valid
for all $\CRule{\Marked{s}}{\Marked{t}}{\varphi} \in \cS$. 
% \item 
%       $\Var(\Pol(\Marked{t})) \subseteq \FVar(\varphi)\cup\Var(\Pol(\Marked{s}))$
%        for any $\CRule{\Marked{s}}{\Marked{t}}{\varphi} \in \cS$.
\end{itemize}
Then, the \emph{PI-based processor} $\PIProcX$ is defined as follows:
\[
 \PIProcX(\DPproblem{\cS}{\cR}) =
 \{~
   \DPproblem{\cS\setminus\newPgtr}{\cR},
   ~ \DPproblem{\cS\setminus\Pbound}{\cR}
%   ~ \DPproblem{\cS\setminus\Pfilter}{\cR}
   ~\}
\]
where 
\begin{itemize}
 \item $\newPgtr = \{ \CRule{\Marked{s}}{\Marked{t}}{\varphi} \in \cS \mid 
       \mbox{$\varphi \limplies \Pol(\Marked{s}) \mathrel{\rel_1} \Pol(\Marked{t})$ is $\cMint$-valid}\,
       \}$, and
 \item $\Pbound = \{ \CRule{\Marked{s}}{\Marked{t}}{\varphi} \in \cS \mid 
       \mbox{$\varphi \limplies \Pol(\Marked{s}) \mathrel{\rel_1} c_\symb{0} \lor \Pol(\Marked{s}) \simeq c_\symb{0}$ is $\cMint$-valid for some $c_\symb{0} \in T(\cGint)$}\,
       \}$.
% \item $\Pfilter = \{ \CRule{\Marked{s}}{\Marked{t}}{\varphi} \in \cS \mid 
%       \Var(\Pol(\Marked{s})) \subseteq \FVar(\varphi)\}$.
       \end{itemize}
\end{definition}

Before proving soundness and completeness of $\PIProcX$, we show some properties of ground dependency chains and ground rewrite sequences w.r.t.\ {\CondRdec}, {\CondRinc}, {\CondSdec}, {\CondSinc}, etc.
The following lemmas hold by assumptions. 
\begin{lemma}
Let $s,t$ be ground terms in $T(\cF\cup\cGint)$, and $\Pol$ a linear PI for $\Marked{\cD_\cR}\cup\cF$ such that {\rm\CondSubtraction} and {\rm\CondPolF} hold.
Suppose that $s \mathrel{\to_{p,\cR}} t$ and $p$ is not a position below the second argument of ``$-$''.	
	\begin{itemize}
	\item If\/ {\rm\CondRdec} holds, then $(\Pol(s))^{\cMint} \geq (\Pol(t))^{\cMint}$ holds.
	\item If\/ {\rm\CondRinc} holds, then $(\Pol(s))^{\cMint} \leq (\Pol(t))^{\cMint}$ holds.
	\end{itemize}
\end{lemma}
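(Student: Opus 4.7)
The plan is to prove both parts uniformly by induction on the length of the position $p$ at which the rewrite step takes place; the two orientations propagate through every inductive step once the rule-side hypothesis is fixed, so essentially the same argument works for {\CondRdec} and {\CondRinc} with the inequality reversed.

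In the base case $p = \varepsilon$ I would write $s = \ell\sigma$ and $t = r\sigma$ for some rule $\CRule{\ell}{r}{\varphi} \in \cR$ and some ground substitution $\sigma$ with $\varphi\sigma$ being $\cMint$-valid. The key auxiliary fact I need is a substitution-compatibility identity $\Pol(u\sigma) = \Pol(u)\,\sigma^{P}$, where $\sigma^{P}(x) = \Pol(\sigma(x))$; this is verified by a straightforward inner induction on the structure of $u$ from the defining clauses of $\Pol$. Because $\sigma|_{\FVar(\varphi)}$ ranges over $T(\cGint)$ and $\Pol$ is the identity on $T(\cGint)$, the substitutions $\sigma$ and $\sigma^{P}$ agree on $\FVar(\varphi)$, so $\cMint \models \varphi\sigma^{P}$ as well. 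Instantiating {\CondRdec} (resp.\ {\CondRinc}) at $\sigma^{P}$ then yields $(\Pol(\ell)\sigma^{P})^{\cMint} \geq (\Pol(r)\sigma^{P})^{\cMint}$ (resp.\ $\leq$), which by the identity is exactly $(\Pol(s))^{\cMint} \geq (\Pol(t))^{\cMint}$ (resp.\ $\leq$).

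For the inductive step I decompose $p = i.p'$ and write $s = h(s_1,\dots,s_i,\dots,s_n)$ and $t = h(s_1,\dots,t_i,\dots,s_n)$ with $s_i \to_{p',\cR} t_i$. The side condition on $p$ descends to $p'$ considered inside $s_i$, so the induction hypothesis supplies the desired inequality between $(\Pol(s_i))^{\cMint}$ and $(\Pol(t_i))^{\cMint}$. It then suffices to argue that the interpretation of $h$ is weakly monotone in its $i$-th argument. If $h \in \cF$, then $(\Pol(s))^{\cMint}$ and $(\Pol(t))^{\cMint}$ differ exactly by $b_i \cdot ((\Pol(s_i))^{\cMint} - (\Pol(t_i))^{\cMint})$ where $\Pol(h) = b_0 + b_1 x_1 + \cdots + b_n x_n$, and the coefficient $b_i$ is non-negative by {\CondPolF}, so the orientation is preserved. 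If $h \in \cGint$, then $h^{\cMint}$ is a linear integer arithmetic operation and the only interpreted slot that fails to be weakly monotone---the second argument of the subtraction symbol---is precisely the one ruled out by the hypothesis on $p$, so the inequality again propagates.

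The main obstacle I anticipate is the $h \in \cGint$ case: one must argue that no argument slot traversed by the rewrite step is anti-monotone under $\cMint$. The hypothesis on $p$, together with {\CondSubtraction} in the ambient setup (which prevents rule right-hand sides from inserting uninterpreted material below a second subtraction slot and so blocks future rewrites from ever occurring at such a slot when this lemma is iterated), is engineered precisely to handle this point. Apart from the inner substitution-compatibility identity for $\Pol$ used in the base case, the remaining work is routine manipulation of linear polynomials.
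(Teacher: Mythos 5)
Your proof is correct and is precisely the argument the paper leaves implicit: the paper gives no proof of this lemma beyond the remark that it ``holds by assumptions,'' and your induction on the rewrite position --- with the substitution-compatibility identity $\Pol(u\sigma)=\Pol(u)\sigma^{P}$ plus the observation that $\sigma$ and $\sigma^{P}$ agree on $\FVar(\varphi)$ for the root case, and weak monotonicity of the surrounding context via {\CondPolF} for the non-root case --- is the standard way to discharge it. The one point worth flagging is that your $h\in\cGint$ case silently assumes the second argument of ``$-$'' is the \emph{only} anti-monotone argument position among the interpreted symbols; since an LIA-structure could in principle interpret some other $g\in\cGint$ with a negative coefficient, this is really an implicit hypothesis on $\cMint$, but the paper makes exactly the same implicit assumption (condition {\CondSubtraction} guards only subtraction), so it is not a gap relative to the paper.
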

%\begin{proof}
%Trivial by assumptions.	
%\qed
%\end{proof}
%
\begin{lemma}
\label{lem:monotonicity}
Let $s,t$ be ground terms in $T(\cF\cup\cGint)$, and $\Pol$ a linear PI for $\Marked{\cD_\cR}\cup\cF$ such that {\rm\CondSubtraction} and {\rm\CondPolF} hold.
Suppose that $s,t$ are rooted by $f/n \in \cD_\cR$.
	\begin{itemize}
		\item If $\Marked{s} \mathrel{\to_{i.p,\cR}} \Marked{t}$ and $s|_i \in T(\cGint)$ for some $i \in \{1,\ldots,n\}$ and some position $p$, then $(\Pol(\Marked{s}))^{\cMint} = (\Pol(\Marked{t}))^{\cMint}$ holds.
		\item If\/ {\rm\CondSdec} holds and $\Marked{s} \mathrel{\to_{\cS}} \Marked{t}$, then $(\Pol(\Marked{s}))^{\cMint} \geq (\Pol(\Marked{t}))^{\cMint}$ holds.
 	\item If\/ {\rm\CondSinc} holds and $\Marked{s} \mathrel{\to_{\cS}} \Marked{t}$, then $(\Pol(\Marked{s}))^{\cMint} \leq (\Pol(\Marked{t}))^{\cMint}$ holds.
	\item Suppose that 
	% $\Marked{s} \mathrel{\to_{i.p,\cR}} \Marked{t}$ for some reducible position $i$ of $f$.
	% w.r.t.\ $\cS$ and some position $p$ not in the second argument of ``$-$''.
	$\Marked{s_0} \mathrel{\to_{i_1.p,\cR}} \Marked{s_2} \mathrel{\to_{i_2.p,\cR}} \cdots \mathrel{\to_{i_n.p,\cR}} \Marked{s_n}$ where $i_1,\ldots,i_n$ are reducible positions of $\Marked{f}$.
	\begin{itemize}
		\item If\/ {\rm\CondRdec} and\/ {\rm\CondPolMdec} hold or {\rm\CondRinc} and\/ {\rm\CondPolMinc} hold, then
 $(\Pol(\Marked{s_0}))^{\cMint} \geq (\Pol(\Marked{s_1}))^{\cMint} \geq \cdots \geq (\Pol(\Marked{s_n}))^{\cMint}$ holds.
		\item If\/ {\rm\CondRdec} and\/ {\rm\CondPolMinc} hold or {\rm\CondRinc} and\/ {\rm\CondPolMdec} hold, then
 $(\Pol(\Marked{s_0}))^{\cMint} \leq (\Pol(\Marked{s_1}))^{\cMint} \leq \cdots \leq (\Pol(\Marked{s_n}))^{\cMint}$ holds.
 	\end{itemize}
	\end{itemize}
\end{lemma}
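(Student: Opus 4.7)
The plan is to handle the four bullets one at a time, exploiting that, under \CondPolF, every symbol of $\cF$ is interpreted by a linear polynomial with nonnegative argument-coefficients and $\Pol$ is the identity on $\cGint$-terms, so $\Pol(\Marked{s})^{\cMint} = a_0 + \sum_{j=1}^{n} a_j\,\Pol(s|_j)^{\cMint}$ is linear in the $\Pol$-value at each argument, with sign controlled by $a_j$. First I address bullet~1: since $s|_i \in T(\cGint)$, every subterm of $s|_i$ lies in $T(\cGint)$, so the matched redex $(s|_i)|_p = \ell\sigma$ is in $T(\cGint)$, forcing $\Root(\ell) \in \cGint$. Local soundness of $\cR$ then guarantees $r \in T(\cGint,\cV)$ and $\cMint$-validity of $\varphi \limplies \ell \simeq r$, which upon instantiation by $\sigma$ yields $(s|_i)^{\cMint} = (t|_i)^{\cMint}$, hence $\Pol(s|_i)^{\cMint} = \Pol(t|_i)^{\cMint}$. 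Since the $j \ne i$ summands coincide verbatim, $\Pol(\Marked{s})^{\cMint} = \Pol(\Marked{t})^{\cMint}$.

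For bullets~2 and~3, a step $\Marked{s} \to_{\cS} \Marked{t}$ unfolds to some $\CRule{\Marked{s'}}{\Marked{t'}}{\varphi} \in \cS$ and a ground substitution $\sigma$ with $\Marked{s} = \Marked{s'}\sigma$, $\Marked{t} = \Marked{t'}\sigma$, and $\varphi\sigma$ $\cMint$-valid. Under \CondSdec, the formula $\varphi \limplies \Pol(\Marked{s'}) \geq \Pol(\Marked{t'})$ is $\cMint$-valid; combined with the standard substitution identity $\Pol(u\sigma)^{\cMint} = \Pol(u)\{x \mapsto \Pol(\sigma(x))\}^{\cMint}$, this delivers $\Pol(\Marked{s})^{\cMint} \geq \Pol(\Marked{t})^{\cMint}$. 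Bullet~3 is literally the same argument with the direction of the inequality reversed, using \CondSinc\ in place of \CondSdec.

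For bullet~4 I argue one step at a time and conclude by transitivity. For a single reduction $\Marked{s_k} \to_{i_k.p,\cR} \Marked{s_{k+1}}$, the two marked terms agree off position $i_k$, so linearity of $\Pol(\Marked{f})$ yields
\[
\Pol(\Marked{s_k})^{\cMint} - \Pol(\Marked{s_{k+1}})^{\cMint} \;=\; a_{i_k}\,\bigl(\Pol(s_k|_{i_k})^{\cMint} - \Pol(s_{k+1}|_{i_k})^{\cMint}\bigr).
\]
The preceding lemma, applied to the $\cR$-step $s_k|_{i_k} \to_{p,\cR} s_{k+1}|_{i_k}$ between ground terms in $T(\cF\cup\cGint)$, fixes the sign of the right-hand factor: it is $\geq 0$ under \CondRdec\ and $\leq 0$ under \CondRinc. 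The coefficient $a_{i_k}$ is $\geq 0$ under \CondPolMdec\ and $\leq 0$ under \CondPolMinc. Matching-sign combinations (\CondRdec+\CondPolMdec\ or \CondRinc+\CondPolMinc) give $\Pol(\Marked{s_k})^{\cMint} \geq \Pol(\Marked{s_{k+1}})^{\cMint}$, and opposite-sign combinations give the reverse inequality; chaining over $k = 0, \dots, n-1$ yields the two compound inequalities claimed.

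The main delicate point is that, to invoke the preceding lemma on $s_k|_{i_k} \to_{p,\cR} s_{k+1}|_{i_k}$, I need $p$ to avoid lying below the second argument of ``$-$''. This is upheld by an auxiliary invariant that rewriting in $\cR\cup\cS$ preserves the structural condition ``every ``$-$''-rooted subterm has its second argument in $T(\cGint)$,'' an invariant that follows by a short induction on the reduction length directly from \CondSubtraction\ — the rules themselves cannot introduce an uninterpreted term as a second argument of ``$-$,'' and the starting terms in a ground chain inherit the property from the DPs producing them. Once this invariant is stated, the preceding lemma is applicable throughout the chain and the sign bookkeeping above goes through with no further difficulty.
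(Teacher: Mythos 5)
The paper does not actually prove this lemma: it is introduced with the single sentence ``The following lemmas hold by assumptions,'' so there is no official argument to compare yours against. Your write-up supplies exactly the reasoning the authors are implicitly relying on, and it is essentially correct: bullet~1 via local soundness of $\cR$ for $\cMint$ (an interpreted redex forces $\Root(\ell)\in\cGint$, hence $r\in T(\cGint,\cV)$ and value preservation by $\varphi\limplies\ell\simeq r$); bullets~2--3 via the substitution identity $\Pol(u\sigma)=\Pol(u)\{x\mapsto\Pol(\sigma(x))\}$, which is legitimate here precisely because $\Pol$ now interprets all of $\Marked{\cD_\cR}\cup\cF$, so $\Pol(\sigma(x))\in T(\cGint)$ and $\cMint$-validity over $\Subst(\cGint)$ applies; bullet~4 via the one-step difference $a_{i_k}\bigl(\Pol(s_k|_{i_k})^{\cMint}-\Pol(s_{k+1}|_{i_k})^{\cMint}\bigr)$ and the sign bookkeeping combining {\CondRdec}/{\CondRinc} with {\CondPolMdec}/{\CondPolMinc}.

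The one place where you are more careful than the paper is also the one place your argument does not quite close as stated. To invoke the preceding lemma on $s_k|_{i_k}\to_{p,\cR}s_{k+1}|_{i_k}$ you need $p$ not to lie below the second argument of ``$-$''; the preceding lemma carries this as an explicit hypothesis, while the bullet you are proving does not, and {\CondSubtraction} is a condition on the \emph{rules}, not on the terms $s_0,\ldots,s_n$. For a genuinely arbitrary ground $s_0$ containing an uninterpreted term below a second argument of ``$-$'', the claimed monotonicity can fail, so your ``auxiliary invariant'' (every ``$-$''-rooted subterm has its second argument in $T(\cGint)$, preserved by $\cR\cup\cS$-rewriting and established for terms produced by instantiated right-hand sides) is not a cosmetic remark but a necessary additional hypothesis; strictly speaking you prove the lemma only for terms satisfying that invariant, which is the class actually arising in the soundness proof of Theorem~4.7. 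It would be cleaner to state the invariant as part of the lemma's hypotheses (or to prove it as a separate preservation lemma) rather than to appeal to ``the DPs producing them,'' since the lemma as written quantifies over all ground terms. This is a defect inherited from the paper's statement rather than an error you introduced.
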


In addition to the above lemmas, we introduce a key lemma that makes the proof of soundness routine.
Let $\cS' \subseteq \cS$.
An infinite $\DPproblem{\cS}{\cR}$-chain is called \emph{$\cS'$-innumerable} if every element in $\cS'$ appears in the chain infinitely many times~\cite{SNS11}.
\begin{lemma}[\cite{SNS11}]
\label{lem:S-innumerable}
Let a DP processor $\Proc$ such that for any DP problem $\DPproblem{\cS}{\cR}$, $\Proc(\DPproblem{\cS}{\cR}) \subseteq 2^{\cS}$.
Then, $\Proc$ is sound and complete if for any DP problem $\DPproblem{\cS}{\cR}$, there exists no $\cS'$-innumerable chain for any $\cS' \subseteq \cS$ such that $\cS' \setminus \cS'' \ne \emptyset$ for all $\cS'' \in \Proc(\DPproblem{\cS}{\cR})$.
\end{lemma}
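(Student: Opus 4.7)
The plan is to split the proof into completeness and soundness. Completeness is almost immediate from the hypothesis $\Proc(\DPproblem{\cS}{\cR}) \subseteq 2^{\cS}$: for any $\DPproblem{\cS''}{\cR} \in \Proc(\DPproblem{\cS}{\cR})$ we have $\cS'' \subseteq \cS$, and so every (minimal) infinite $\DPproblem{\cS''}{\cR}$-chain is automatically a (minimal) infinite $\DPproblem{\cS}{\cR}$-chain, because its $\to_{\varepsilon,\cS''}$-steps are $\to_{\varepsilon,\cS}$-steps and its $\to^*_{>\varepsilon,\cR}$-steps involve the same $\cR$. Thus if some output is infinite, so is the input.

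For soundness I would argue by contrapositive: assume $\DPproblem{\cS}{\cR}$ is infinite, witnessed by an infinite minimal chain $\Marked{s_0}\sigma_0 \to_{\varepsilon,\cS} \Marked{t_0}\sigma_0 \to^*_{>\varepsilon,\cR} \Marked{s_1}\sigma_1 \to_{\varepsilon,\cS} \cdots$. Since $\cS$ is finite, a pigeonhole argument yields a subset $\cS^* \subseteq \cS$ consisting of exactly those pairs used infinitely often as root steps in this chain. Dropping a sufficiently long prefix, I obtain a suffix whose root steps all lie in $\cS^*$ and that uses every element of $\cS^*$ infinitely often, i.e.\ a $\cS^*$-innumerable $\DPproblem{\cS}{\cR}$-chain. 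This suffix is again minimal because a tail of a minimal chain is minimal.

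Now I invoke the hypothesis of the lemma for the DP problem $\DPproblem{\cS}{\cR}$: there is no $\cS'$-innumerable chain whenever $\cS' \setminus \cS'' \ne \emptyset$ for every $\cS'' \in \Proc(\DPproblem{\cS}{\cR})$. Since a $\cS^*$-innumerable chain does exist, this property must fail for $\cS' = \cS^*$, meaning there exists $\DPproblem{\cS''}{\cR} \in \Proc(\DPproblem{\cS}{\cR})$ with $\cS^* \subseteq \cS''$. The minimal tail chain, whose root steps lie in $\cS^* \subseteq \cS''$, is therefore an infinite minimal $\DPproblem{\cS''}{\cR}$-chain, so $\DPproblem{\cS''}{\cR}$ is infinite. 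This is exactly what soundness requires.

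The main obstacle is the suffix-extraction step: one must check that (i) the extracted suffix is still a well-formed chain in which $\to_{\varepsilon,\cS}$- and $\to^*_{>\varepsilon,\cR}$-steps interleave as in the definition, (ii) every pair in $\cS^*$ appears infinitely often in this suffix, and (iii) minimality is inherited from the original chain. All three are routine consequences of the definition of a chain together with the finiteness of $\cS$; the only genuine content is the pigeonhole choice of $\cS^*$, and that is standard.
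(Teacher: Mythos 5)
Your proof is correct, and it is the standard argument: the paper itself gives no proof of this lemma but imports it from~\cite{SNS11}, and your route (completeness from $\Proc(\DPproblem{\cS}{\cR})\subseteq 2^{\cS}$ plus chain inclusion; soundness by pigeonholing the infinitely-used pairs into $\cS^*$, extracting an $\cS^*$-innumerable suffix, and deducing $\cS^*\subseteq\cS''$ for some output $\cS''$) is exactly the intended one. The only detail worth adding is the disjunct in the definition of an \emph{infinite} DP problem (``\ldots or $\cR$ is not terminating''): since $\cR$ is unchanged by $\Proc$, that case transfers trivially in both directions, so your argument goes through unchanged.
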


Finally, we show soundness and completeness.
\begin{theorem}
\label{thm:soundness-completeness_of_newPIproc}
$\PIProcX[\DecDec]$, $\PIProcX[\IncDec]$, $\PIProcX[\DecInc]$, and $\PIProcX[\IncInc]$ are sound and complete.
\end{theorem}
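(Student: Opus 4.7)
The plan is to invoke the abstract criterion of Lemma~\ref{lem:S-innumerable}. Since $\PIProcX(\DPproblem{\cS}{\cR}) = \{\DPproblem{\cS\setminus\newPgtr}{\cR},\DPproblem{\cS\setminus\Pbound}{\cR}\}$, the sets $\cS'\subseteq\cS$ to be ruled out are exactly those with $\cS'\cap\newPgtr\ne\emptyset$ and $\cS'\cap\Pbound\ne\emptyset$; in any $\cS'$-innumerable chain, some pair of $\newPgtr$ then appears infinitely often and some pair of $\Pbound$ does too. I would first reduce to \emph{ground} chains: any chain can be ground-instantiated by, e.g., sending every variable to $\symb{0}$, without disturbing how often each pair of $\cS$ appears.

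Given a hypothetical ground $\cS'$-innumerable chain
\[
\Marked{s_0}\sigma_0 \to_{\varepsilon,\cS} \Marked{t_0}\sigma_0 \to^*_{>\varepsilon,\cR} \Marked{s_1}\sigma_1 \to_{\varepsilon,\cS} \Marked{t_1}\sigma_1 \to^*_{>\varepsilon,\cR}\cdots,
\]
apply $\Pol$ pointwise and interpret in $\cMint$ to obtain an integer sequence. The central claim is that this sequence is $\rel_1$-weakly monotone. Root steps $\to_{\varepsilon,\cS}$ are $\rel_1$-weak by the last bullet of the hypothesis on $\Pol$. A non-root $\to_{>\varepsilon,\cR}$-step is handled by Lemma~\ref{lem:monotonicity}: if it occurs at a non-reducible argument of the root symbol, that argument sits in $T(\cGint)$ and local soundness of $\cR$ freezes its $\cMint$-value, and hence the root value too; if it occurs at a reducible position, the direction of its effect on the root value is determined by the combination of $\rel_2$ (the direction of $\cR$ on ordinary terms) and $\rel_3$ (the sign of the corresponding coefficient). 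A four-way case analysis on $(\rel_1,\rel_2,\rel_3)\in\{\DecDec,\IncDec,\DecInc,\IncInc\}$ matches each parameter triple exactly with the fourth bullet of Lemma~\ref{lem:monotonicity}, giving the desired $\rel_1$ direction in every case. Condition {\CondSubtraction} is invoked to exclude rewrites below second arguments of subtraction, which would otherwise flip monotonicity, and {\CondPolF} propagates the inequality through the uninterpreted context between the rewrite position and the root.

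Once $\rel_1$-weak monotonicity is in hand, the contradiction is immediate. A pair of $\newPgtr$ occurring infinitely often contributes infinitely many strict $\rel_1$-steps, and a pair of $\Pbound$ occurring infinitely often forces the sequence to meet a $\rel_1$-opposite bound (a lower bound when $\rel_1$ is $>$, an upper bound when $\rel_1$ is $<$) infinitely often. No integer sequence can be simultaneously $\rel_1$-weakly monotone, $\rel_1$-bounded, and strictly $\rel_1$-progressing at infinitely many steps, so the chain cannot exist. Lemma~\ref{lem:S-innumerable} then delivers both soundness and completeness of $\PIProcX$ for all four parameter triples.

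The hardest step is the uniform monotonicity argument. The $\DecDec$ instance largely recapitulates the analysis of~\cite{FK08}, but the $\IncDec$, $\DecInc$, and $\IncInc$ cases require careful sign bookkeeping when a rewrite below a reducible position is pulled back through a possibly negative coefficient; the safeguards are exactly {\CondSubtraction} together with the appropriate sign condition {\CondPolMdec} or {\CondPolMinc}. Verifying that these suffice and align correctly with $\rel_1$ in each of the four combinations is where the real work lies.\qed
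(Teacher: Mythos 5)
Your proposal follows essentially the same route as the paper: reduce to ground chains, invoke Lemma~\ref{lem:S-innumerable} to restrict attention to $\cS'$ with $\cS'\cap\newPgtr\ne\emptyset$ and $\cS'\cap\Pbound\ne\emptyset$, use Lemma~\ref{lem:monotonicity} to turn an $\cS'$-innumerable chain into a weakly monotone integer sequence with infinitely many strict steps and a bound, and derive a contradiction. The only presentational difference is that the paper writes out a single parameter triple and declares the rest analogous, whereas you sketch the uniform four-way sign bookkeeping explicitly; the substance is the same.
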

\begin{proof}
We only consider the case of $\PIProcX[\DecInc]$.
The proofs of the remaining cases analogous.
The proof below follows that of~\cite[Theorem~3.3]{SNS11}.
The only difference from those proofs is the treatment of $\Marked{s} \mathrel{\to^*_\cR} \Marked{t}$.
By Lemma~\ref{lem:S-innumerable}, it suffices to show that for any subset $\cS' \subseteq \cS$ with $\cS'\cap \newPgtr \ne \emptyset$ and $\cS' \cap \Pbound \ne \emptyset$, there is no $\cS'$-innumerable $\DPproblem{\cS}{\cR}$-chain.
We proceed by contradiction.
Suppose that there exists some subset $\cS' \subseteq \cS$ such that $\cS'\cap \newPgtr \ne \emptyset$, $\cS' \cap \Pbound \ne \emptyset$, and there exists an $\cS'$-innumerable $\DPproblem{\cS}{\cR}$-chain.
Then, we can assume w.l.o.g.\ that the $\cS'$-innumerable chain is ground.%
\footnote{
If the infinite chain is not ground, then we can instantiate it with ground terms e.g., $\symb{0}$.
The existence of interpreted ground terms is ensured by the user-specified structure (see the definition of structures).}
Let $\Marked{s_0} \to_{\varepsilon,\cS} \Marked{t_0}
\to^*_{>\varepsilon,\cR}\Marked{s_1} \to_{\varepsilon,\cS} \Marked{t_1}
\to^*_{>\varepsilon,\cR} \cdots$ 
be the $\cS'$-innumerable ground $\DPproblem{\cS}{\cR}$-chain.
It follows from Lemma~\ref{lem:monotonicity} that for all $i \geq 0$, 
\[
(\Pol(\Marked{s_i}))^{\cMint} \leq (\Pol(\Marked{t_i}))^{\cMint}
~~\mbox{and}~~
(\Pol(\Marked{t_i}))^{\cMint} \leq (\Pol(\Marked{s_{i+1}}))^{\cMint}.
\]
There exists a ground term $c_0$ (an upper bound) such that $(\Pol(\Marked{s_i}))^{\cMint} \leq (\Pol(c_0))^{\cMint}$ holds for all $i$ with $\Marked{s_i} \mathrel{\to_{\varepsilon,\Pbound}} \Marked{t_i}$.
Since the chain is $\cS'$-innumerable and $\cS'\cap\Pbound\ne\emptyset$, $\Pbound$-steps appears in the chain infinitely many times, and thus, $(\Pol(\Marked{s_i}))^{\cMint} \leq (\Pol(c_0))^{\cMint}$ holds for all $i \geq 0$.
In addition, we have that $(\Pol(\Marked{s_i}))^{\cMint} < (\Pol(t_i))^{\cMint}$ holds for all $i$ such that $\Marked{s_i} \mathrel{\to_{\varepsilon,\newPgtr}} \Marked{t_i}$.
It follows from the assumptions (the chain is $\cS'$-innumerable, $\cS'\cap\newPgtr\ne\emptyset$, and $\cS'\cap\Pbound\ne\emptyset$) that $\newPgtr$-steps appears in the chain infinitely many times, and thus, the increasing sequence $(\Pol(\Marked{s_0}))^{\cMint} \leq (\Pol(\Marked{t_0}))^{\cMint} \leq (\Pol(\Marked{s_1}))^{\cMint} \leq (\Pol(\Marked{t_1}))^{\cMint} \leq \cdots$ contains infinitely many strictly increasing steps ($<$) while all elements are less than or equal to $(\Pol(c_0))^{\cMint}$.
%This implies the existence of a infinite strictly-increasing sequence with an upper bound.
This contradicts the fact that there is no bounded strictly-increasing infinite sequence of integers.
\qed	
\end{proof}

By definition, it is clear that $\PIProcX[\DecDec]$ and $\PIProcX[\IncDec]$ are the same functions from theoretical point of view, and $\PIProcX[\DecInc]$ and $\PIProcX[\IncInc]$ are so.
For example, given a DP problem $\DPproblem{\cS}{\cR}$ and a linear PI $\Pol_1$ satisfying the conditions of $\PIProcX[\DecDec]$, we can construct a linear PI $\Pol_2$ such that $\Pol_2$ satisfies the conditions of $\PIProcX[\IncDec]$ and $\PIProcX[\DecDec](\DPproblem{\cS}{\cR})=\PIProcX[\IncDec](\DPproblem{\cS}{\cR})$.

%%%%%%%%%%%%%%%%%%%%%%%%%%%%%%%%%%%%%%%%%%%%%%%%%%%%%%%%%%%
\subsection{Implementation and Experiments}
\label{subsec:execution}

We implemented the new PI-based processors $\PIProcX[\DecDec]$, $\PIProcX[\IncDec]$, $\PIProcX[\DecInc]$, and $\PIProcX[\IncInc]$ in \textsf{Cter}, a termination prover based on the techniques in~\cite{SNS11}.
Those processors first generate a template of a linear PI such as $\Pol(f)=a_0 + a_1 x_1 + \cdots a_n x_n$ with non-fixed coefficients $a_0,a_1,\ldots,a_n$, producing a non-linear integer arithmetic formula that belongs to \texttt{NIA}, a logic category of \textsf{SMT-LIB}%
\footnote{
\url{http://smtlib.cs.uiowa.edu}}.
Satisfiability of the generated formula corresponds to the existence of the PI satisfying the conditions that the processors require.
Then, the processors call \textsf{Z3}~\cite{z3}, an SMT solver, to find an expected PI:
if \textsf{Z3} returns ``\verb|unsat|'', then there exists no PI satisfying the conditions.

Table~\ref{tbl:runtime} illustrates the results of experiments to prove termination of $\cRMc$ using one of the new processors with timeout (3,600 seconds).
Experiments are conducted on a machine running Ubuntu 14.04 LTS equipped with an Intel Core i5 CPU at 3.20 GHz with 8 GB RAM.
$\PIProcX[\DecDec]$, $\PIProcX[\IncDec]$, and $\PIProcX[\DecInc]$ were applied once, but 
$\PIProcX[\IncInc]$ was applied twice---it first decomposes the DP problem $\DPproblem{\{(3),(4)\}}{\cRm}$ to $\DPproblem{\{(3)\}}{\cRm}$ and then solves $\DPproblem{\{(3)\}}{\cRm}$.
%In each case, the PI-based processor was applied once.
This means that \textsf{Z3} is called once or twice to check satisfiability of a formula given by the processor.
To show how PI-based processors work, Table~\ref{tbl:runtime} shows both the results of \textsf{Z3} and the corresponding PIs if existing.
%Surprisingly, for each pair of processors with the same power, $(\PIProcX[\DecDec],\PIProcX[\IncDec])$ and $(\PIProcX[\DecInc],\PIProcX[\IncInc])$, the running times are quite different. 
Surprisingly, for $\PIProcX[\DecInc]$ and $\PIProcX[\IncInc])$ with the same power, the execution times are quite different.
The difference might be caused by how \textsf{Z3} searches assignments that satisfy given formulas.
%\textcolor{red}{%
%Note that we improved the original PI-based processor in~\cite{SNS11}---not the simplified version introduced in this paper---and thus the bounds do not fit Definition~\ref{def:newPIProc} while they are correct.
%}

\begin{table}[t]
%\begin{center}
\caption{the result of experiments to prove termination of $\cRMc$ using the new PI-based processors}
\label{tbl:runtime}
\small 
%\begin{tiny}
\centering
\begin{tabular}{|@{\,}c@{\,}|@{~}c@{~}|@{\,}c@{\,}||@{~}c@{~}|@{\,}r@{\,}||@{\,}c@{\,}|@{\,}r@{\,}|@{\,}r@{\,}|@{\,}r@{\,}|}
\hline 
Used processor %$(\rel_1,\rel _2,\rel _3)$ 
& Chains & Rewrite sequences & Result & Time (sec.) &  Output of \textsf{Z3} & \multicolumn{1}{@{\,}c@{\,}|@{\,}}{$\Pol(\symb{f})$} & \multicolumn{1}{@{\,}c@{\,}|@{\,}}{$\Pol(\Marked{\symb{f}})$} & \multicolumn{1}{c|}{$c_0$} \\ 
\hline \hline
%1&$\DecDec$ 
$\PIProcX[\DecDec]$ & decreasing & decreasing & failure & 0.60 &  unsat & \multicolumn{1}{@{\,}c@{\,}|@{\,}}{---} & \multicolumn{1}{@{\,}c@{\,}|@{\,}}{---} & \multicolumn{1}{c|}{--} \\ 
\hline 
%2&$\IncDec$ 
$\PIProcX[\IncDec]$ & increasing & decreasing & failure & 0.56 & unsat & \multicolumn{1}{@{\,}c@{\,}|@{\,}}{---} & \multicolumn{1}{@{\,}c@{\,}|@{\,}}{---} & \multicolumn{1}{c|}{--} \\ 
\hline 
%3&$\DecInc$ 
$\PIProcX[\DecInc]$ & decreasing &increasing & {\Success} & 30 &
%\begin{tabular}{@{}c@{}}sat\\sat\end{tabular} &
sat &
% \begin{tabular}{@{}c@{}}$-10+x_1$\\$-10+x_1$\end{tabular} & 
$-10+x_1$ &
%\begin{tabular}{@{}c@{}}$\Revised[1120]{199}-\Revised[11]{2}x_1$\\$\Revised[200]{199}-2x_1$\end{tabular} & 
$-1-x_1$ &
%\begin{tabular}{@{}c@{}}\Revised[21]{-1}\\0\end{tabular} 
$-101$
\\ 
\hline 
%4&$\IncInc$ 
$\PIProcX[\IncInc]$ & increasing & increasing & {\Success} & 439 & \begin{tabular}{@{}c@{}}sat\\sat\end{tabular} & \multicolumn{1}{@{\,}c@{\,}|@{\,}}{\begin{tabular}{@{}c@{}}$-11+x_1$\\$-10+x_1$\end{tabular}} & \multicolumn{1}{@{\,}c@{\,}|@{\,}}{\begin{tabular}{@{}r@{}}$x_1$\\$~~2+x_1$\end{tabular}} & \multicolumn{1}{c|}{\begin{tabular}{@{}r@{}}$100$\\$200$\end{tabular}}  \\ 
\hline 
\end{tabular} 
%\end{tiny}
%\end{center}
\end{table}

Table~\ref{tbl:comparison} shows the results (``success'', ``failure'', or ``timeout'' by 3,600 seconds, with execution time) of proving termination of the following examples with nested recursions over the integers by using 
\textsf{Cter} with our previous or new PI-based processors,
{\AProVE}~\cite{AProVE} that proves termination of ITRSs~\cite{FGPSF09}, and 
\textsf{Ctrl}~\cite{KN15lpar} that proves termination of LCTRSs~\cite{KN13frocos}:
\begin{itemize}
	\item a variant of $\cRMc$
\[
 \cRMc' = 
 \left\{
 \begin{array}{r@{\>}c@{\>}l@{~~}c@{}c@{}c}
 \CRule{\symb{f}(x) &}{& \symb{f}(\symb{f}(\symb{s}^2(x))) &}{& \symb{s}^4(\symb{0}) > x &} \\
 \CRule{\symb{f}(x) &}{& \symb{p}(x) &}{& \neg (\symb{s}^4(\symb{0}) > x) &} \\
 \end{array}
 \right\}
 \cup
 \cRsp
\]
	\item a variant of $\cRack$ where $\symb{ack}$ is totally defined for the integers
\[
 \cRack' = 
 \left\{
 \begin{array}{r@{\>}c@{\>}l@{~~}c@{}c@{}c}
 \CRule{\symb{ack}(x,y) &}{& \symb{s}(y) &}{& x \leq \symb{0} &} \\
 \CRule{\symb{ack}(x,y) &}{& \symb{ack}(\symb{p}(x),\symb{s}(\symb{0})) &}{& x > \symb{0} \land y \leq \symb{0} &} \\
 \CRule{\symb{ack}(x,y) &}{& \symb{ack}(\symb{p}(x),\symb{ack}(x,\symb{p}(y))) &}{& x > \symb{0} \land y > \symb{0} &} \\
 \end{array}
 \right\}
 \cup
 \cRsp
\]
	\item $\mathsf{nest}$ in~\cite{Giesl97}
\[
 \cRnest = 
 \left\{
 \begin{array}{r@{\>}c@{\>}l@{~~}c@{}c@{}c}
 \CRule{\symb{nest}(x) &}{& \symb{0} &}{& x \leq \symb{0} &} \\
 \CRule{\symb{nest}(x) &}{& \symb{nest}(\symb{nest}(\symb{p}(x))) &}{& x > \symb{0} &} \\
 \end{array}
 \right\}
 \cup
 \cRsp
\]
	\item a variant of $\cRnest$
\[
 \cRnest' = 
 \left\{
 \begin{array}{r@{\>}c@{\>}l@{~~}c@{}c@{}c}
 \CRule{\symb{nest}(x) &}{& \symb{s}^3(\symb{0}) &}{& x \leq \symb{s}^3(\symb{0}) &} \\
 \CRule{\symb{nest}(x) &}{& \symb{nest}(\symb{nest}(\symb{p}(x))) &}{& x > \symb{s}^3(\symb{0}) &} \\
 \end{array}
 \right\}
 \cup
 \cRsp
\]
	\item another variant of $\cRnest$
\[
 \cRnest'' = 
 \left\{
 \begin{array}{r@{\>}c@{\>}l@{~~}c@{}c@{}c}
 \CRule{\symb{nest}(x,y) &}{& \symb{0} &}{& x \leq \symb{0} &} \\
 \CRule{\symb{nest}(x,y) &}{& \symb{nest}(\symb{nest}(\symb{p}(x),x),y) &}{& x > \symb{0} &} \\
 \end{array}
 \right\}
 \cup
 \cRsp
\]
\end{itemize}
The original PI-based processor $\PIProc$ is efficient but not so powerful.
Though, $\PIProc$ succeeded in proving termination of $\cRack$ while our new PI-based processors failed.
This is because nested recursive call of $\symb{ack}$ does not have to be taken into account to prove termination, and thus, the first argument of $\symb{ack}$, which is not a reducible position, is enough to prove termination. 
On the other hand, we have not introduced the notion of \emph{usable rules} to our implementation, and thus, our new PI-based processors have to take rules in $\cRack$ into account even if we drop all reducible positions of marked symbols by PIs.
Since $\PIProc$ is efficient, we may apply $\PIProc$ and other PI-based processors to a DP problem in order:
if $\PIProc$ does not make the DP problem smaller, then we apply others to the problem.

$\PIProcX[\DecDec]$ and $\PIProcX[\IncDec]$ succeeded in proving termination of $\cRnest$ and $\cRnest''$, but for each of $\cRnest$ and $\cRnest''$, the execution times are quite different, e.g., $\PIProcX[\DecDec]$ took 0.12 and 514 seconds for $\cRnest$ and $\cRnest''$, respectively.
This difference might be caused by the difference of formulas that \textsf{Z3} solved although there are common assignments that satisfies both of the formulas.

\begin{table}[t]
\caption{the result of experiments to prove termination of $\cRMc$, $\cRMc'$, $\cRack$, $\cRack'$, $\cRnest$, $\cRnest'$, and $\cRnest''$}
% using our PI-based processors and {\AProVE}.}
\label{tbl:comparison}
\small 
\centering
\begin{tabular}{|c|c||c|@{\,}c@{\,}|@{\,}c@{\,}|@{\,}c@{\,}|@{\,}c@{\,}|c|c|}
\hline
\multicolumn{2}{|c||}{} & \multicolumn{5}{|c|}{\textsf{Cter}} & \rule{0pt}{10pt} {\AProVE}~\cite{AProVE} & \textsf{Ctrl}~\cite{KN15lpar} \\
\cline{3-7}
\multicolumn{2}{|c||}{\raisebox{5pt}[0pt]{Example}} & $\PIProc$ & \rule[-2mm]{0pt}{0pt} $\PIProcX[\DecDec]$ & $\PIProcX[\IncDec]$ & $\PIProcX[\DecInc]$ & $\PIProcX[\IncInc]$ & 
%	\raisebox{5pt}[0pt]{{\AProVE}~\cite{AProVE}} 
{\scriptsize (ver.\ Aug.\ 30, '17)}
	&
%    \raisebox{5pt}[0pt]{\textsf{Ctrl}~\cite{KN15lpar}} 
{\scriptsize (ver.\ 1.1)}
    \\
\hline\hline
& result & failure & failure & failure & {\Success} & {\Success} & timeout & failure \\
\cline{2-9}
\raisebox{5pt}[0pt]{$\cRMc$} & time (sec.) & 0.08 & 0.60 & 0.56 & 30 & 439 & --- & 0.1 \\
\hline
& result & failure & failure & failure & {\Success} & {\Success} & {\Success} & failure \\
\cline{2-9}
\raisebox{5pt}[0pt]{$\cRMc'$} & time (sec.) & 0.07 & 0.14 & 0.14 & 6.1 & 6.7 & 1.5 & 0.1 \\
\hline
& result & {\Success} & timeout & timeout & failure & failure & {\Success} & {\Success} \\
\cline{2-9}
\raisebox{5pt}[0pt]{$\cRack$} & time (sec.) & 0.18 & --- & --- & 0.28 & 0.31 & 1.6 & 0.2 \\
\hline
& result & {\Success} & failure & failure & failure & failure & {\Success} & {\Success} \\
\cline{2-9}
\raisebox{5pt}[0pt]{$\cRack'$} & time (sec.) & 0.20 & 59 & 134 & 0.31 & 0.31 & 1.4 & 0.2 \\
\hline
& result & failure & {\Success} & {\Success} & failure & failure & {\Success} & failure \\
\cline{2-9}
\raisebox{5pt}[0pt]{$\cRnest$} & time (sec.) & 0.06 & 0.12 & 0.12 & 0.10 & 0.09 & 1.4 & 0.2 \\
\hline
& result & failure & {\Success} & {\Success} & failure & failure & timeout & failure \\
\cline{2-9}
\raisebox{5pt}[0pt]{$\cRnest'$} & time (sec.) & 0.07 & 0.27 & 0.15 & 0.09 & 0.08 & --- & 0.2 \\
\hline
& result & failure & {\Success} & {\Success} & failure & failure & {\Success} & failure \\
\cline{2-9}
\raisebox{5pt}[0pt]{$\cRnest''$} & time (sec.) & 0.08 & 514 & 109 & 0.09 & 0.10 & 1.3 & 0.1 \\
\hline
\end{tabular}
\end{table}

%%%%%%%%%%%%%%%%%%%%%%%%%%%%%%%%%%%%%%%%%%%%%%%%%%%%%%%%%%%
\section{Conclusion}
\label{sec:conclusion}
%%%%%%%%%%%%%%%%%%%%%%%%%%%%%%%%%%%%%%%%%%%%%%%%%%%%%%%%%%%

In this paper, we showed sufficient conditions of PIs for transforming dependency chains into bounded monotone sequences of integers, and improved the PI-based processor proposed in~\cite{SNS11}, providing four PI-based processors.
We showed that two of them are useful to prove termination of a constrained TRS defining the McCarthy 91 function over the integers. 

One of the important related work is the methods in~\cite{FGMSKTZ08} and~\cite{FGPSF09}.
The PI-based processor in~\cite{FGPSF09} for ITRSs is almost the same as $\PIProcX[\DecDec]$, and thus, it cannot prove termination of $\cRMc$.
The PI-based processor in~\cite{FGMSKTZ08} for TRSs uses more general and powerful PIs, and can transform ground rewrite sequences of TRSs into increasing sequences of integers by exchanging the left- and right-hand sides of rewrite rules that appear below \emph{negative contexts} (reducible positions which are given negative coefficients by PIs).
For this reason, our PI-based processors are simplified variants of the PI-based processor while it has to be extended to constrained rewriting. 
The PI-based processor in~\cite{FGMSKTZ08} is not extended to ITRSs in~\cite{FGPSF09}, but the extended processor is implemented in {\AProVE}.
The reason why {\AProVE} failed to prove termination of e.g., $\cRMc$ is that {\AProVE} tries to detect appropriate coefficients for PIs from $-1$ to $2$.
	The range must be enough for many cases, e.g., {\AProVE} succeeded in proving termination of $\cRMc'$.
	By expanding the range of coefficients to e.g., $[-255,256]$, {\AProVE} can immediately prove termination of $\cRMc$.
	This paper showed that the narrow range for coefficients is not enough to prove termination of the McCarthy 91 function.
	
For some examples, the execution time of the proposed processors are larger than we expected, and we would like to improve efficiency. 
Given a DP problem, the current implementation produces a single large quantified non-linear formula of integer arithmetic expressions, and passes the formula to \textsf{Z3} that may spend much time to solve such a complicated formula.
One of our future work is to improve efficiency of the implementation by introducing the way in~\cite[Section~4.1]{FGPSF09} to simplify formulas passed to \textsf{Z3}.

\paragraph{Acknowledgement}
We thank the anonymous reviewers for their useful remarks for further development of our PI-based processors. 
We also thank Carsten Fuhs for his helpful comments to compare our results with~the techniques in~\cite{FGMSKTZ08,FGPSF09}, 
and for his confirming that the expansion of the range for coefficients enables {\AProVE} to succeed in proving termination of the McCarthy 91 function over the integers.

%%%%%%%%%% BIBLIOGRAPHY %%%%%%%%%%
%\bibliographystyle{eptcs}
%\bibliography{biblio}

\end{document}